\newtheorem{theorem}{Theorem}
\newtheorem{definition}{Definition}
\begin{document}
%-------------------------------------------------------------------------------
\pagestyle{empty}  % 全局页码样式设为“空”，不显示页码
%don't want date printed
\date{}

% make title bold and 14 pt font (Latex default is non-bold, 16 pt)
\title{\Large \bf A Fast Solver-Free Algorithm for Traffic Engineering\\ in Large-Scale Data Center Network}

%for single author (just remove % characters)
\author{%
\textnormal{Yingming Mao\textsuperscript{1,2}, 
Qiaozhu Zhai\textsuperscript{1}, 
Ximeng Liu\textsuperscript{3}, 
Zhen Yao\textsuperscript{4}, 
Xia Zhu\textsuperscript{4}, 
Yuzhou Zhou\textsuperscript{1}}\\\textnormal{
\textsuperscript{1}Xi'an Jiaotong University}
\textnormal{\textsuperscript{2}Shanghai Innovation Institute}
\textnormal{\textsuperscript{3}Shanghai Jiao Tong University}
\textnormal{\textsuperscript{4}\textnormal{Huawei}}
}

\maketitle

%-------------------------------------------------------------------------------
\begin{abstract}
%-------------------------------------------------------------------------------
Rapid growth of data center networks (DCNs) poses significant challenges for large-scale traffic engineering (TE). Existing acceleration strategies, which rely on commercial solvers or deep learning, face scalability issues and struggle with degrading performance or long computational time.

Unlike existing algorithms adopting parallel strategies, we propose Sequential Source-Destination Optimization (SSDO), a sequential solver-free algorithm for intra-DCN TE. SSDO decomposes the problem into subproblems, each focused on adjusting the split ratios for a specific source-destination (SD) demand while keeping others fixed. To enhance the efficiency of subproblem optimization, we design a Balanced Binary Search Method (BBSM), which identifies the most balanced split ratios among multiple solutions that minimize Maximum Link Utilization (MLU). SSDO dynamically updates the sequence of SDs based on real-time utilization, which accelerates convergence and enhances solution quality.

We evaluate SSDO primarily on Meta DCNs, and additionally on two WAN topologies as auxiliary demonstrations of generality. In a Meta topology, SSDO achieves a 65\% and 60\% reduction in normalized MLU compared to TEAL and POP, two state-of-the-art TE acceleration methods, while delivering a $12\times$ speedup over POP. These results demonstrate the superior performance of SSDO in large-scale TE.
\end{abstract}

%-------------------------------------------------------------------------------
\section{Introduction}
With the rapid development of social networks  and large language models (LLMs) \cite{qian_alibaba_2024}, data center networks (DCNs) face increasingly demanding performance requirements. To address this, companies like Microsoft \cite{hong2013achieving} and Google \cite{poutievski_jupiter_2022} have adopted centralized Traffic Engineering (TE) systems powered by Software-Defined Networking (SDN) \cite{6567024,AKYILDIZ20141,10.1145/3341302.3342069,jain2013b4,liu_traffic_2014,10.1145/3482898.3483354,BwE}. These systems optimize traffic routing across fixed network paths to improve performance, often formulating TE as multicommodity flow problems \cite{mitra_case_1999} to minimize Maximum Link Utilization (MLU) or maximize network flow, solved by a centralized controller \cite{alizadeh_conga_2014, guo_dcell_2008, al-faresHederaDynamicFlow2010}.

The TE controller operates by collecting traffic demands and solving a linear programming \cite{mitra_case_1999} to determine traffic allocations. This periodic process ensures that the routing of traffic aligns with real-time demands  \cite{zhang_gemini_2021, poutievski_jupiter_2022}.  However, as DCNs scale to hundreds of nodes and tens of thousands of edges, the computational overhead grows significantly, making real-time TE increasingly challenging.

Contemporary traffic engineering (TE) acceleration methods can be broadly categorized into two approaches: linear programming (LP)-based and deep learning (DL)-based methods. LP-based algorithms accelerate TE by decomposing the TE optimization problem into smaller subproblems based on demands or topologies \cite{narayanan_solving_2021, abuzaid_contracting_2021}, which are solved concurrently. However, this often results in degrading TE quality, as neglecting the coupling between subproblems. DL-based methods, such as Teal \cite{xu_teal_2023} leverage historical data to directly map traffic matrices to TE configurations, significantly accelerating the computation process. However, these methods face challenges such as dependence on the quality and diversity of training data and may struggle to generalize to unseen traffic patterns or network conditions.

In contrast to conventional acceleration algorithms, our key insight is to address the coupling between subproblems by solving them in a carefully designed sequence, where each subproblem builds on the solution of the previous one. Unlike parallel schemes, which often struggle to maintain global coherence, the sequential strategy progressively incorporates global network information by following a structured optimization order. This iterative refinement stabilizes at a high-quality solution while mitigating the degradation issues that commonly hinder parallel methods, making it a more reliable alternative.

Sequential TE algorithms require each subproblem to be solved efficiently, as cumulative computation time can become a bottleneck. Thus, Sequential Source-Destination Optimization (SSDO) was proposed, which decomposes the original problem into subproblems, each optimizing the split ratios for a specific source-destination (SD). This structure enables the design of a binary search-based algorithm, avoiding the high complexity of LP solvers. However, subproblems often have multiple valid solutions, and selecting an unsuitable one can slow convergence and degrade TE quality, making LP solvers unsuitable for subproblem solving. To mitigate this, we develop the Balanced Binary Search Method (BBSM), which not only accelerates subproblem solving but also ensures that selected solutions enhance subsequent optimization.

In addition, SSDO adopts a dynamic optimization sequence that prioritizes edges with the highest utilization. In each iteration, it identifies the most congested edges and selects all SDs whose paths traverse them. The corresponding subproblems are then solved to adjust split ratios, reducing congestion. After each step, edge utilization is updated to guide subsequent optimizations, ensuring that SSDO continuously focuses on the most constrained parts of the network and accelerates convergence toward higher-quality solutions. Moreover, since SSDO ensures a non-increasing MLU during optimization, terminating the algorithm at any point guarantees a solution that is at least improved compared to the initial configuration.

We evaluate SSDO with Meta DCNs and various wide-area network (WAN) topologies. SSDO offers a better balance of computation time and TE quality than existing algorithms. On the Meta-Web topology with a per-pair four-path limit, SSDO reduces solution time by 92\% relative to LP, with an error of less than 1\%. It also reduces error by 60\% and time by 90\% against POP \cite{narayanan_solving_2021}, a state-of-the-art LP-based acceleration method. In topologies that are too large for DL-based methods, SSDO consistently delivers efficient and high-quality solutions. Our code is available at~\cite{ymmao-xjtusiiYingmingMaoSSDO2025}.

\textbf{This work does not raise any ethical issues.
}
%-------------------------------------------------------------------------------
\section{Background and Motivation}\label{motivation}

\subsection{Existing Methods Facing Scale Challenge}

The rapid expansion of networks has made large-scale TE increasingly challenging. As an LP problem, allocating traffic across paths containing hundreds of nodes often requires several hours using commercial solvers. Consequently, operators are seeking methods to accelerate TE optimization.

\noindent\textbf{LP-based direct methods.} 
Traditionally, TE is modeled as a multicommodity flow problem \cite{mitra_case_1999} and solved using commercial LP solvers due to its modest scale in earlier networks. However, with the expansion of data center networks, the computational overhead of LP solvers has become prohibitive. The worst-case complexity of LP is regarded as $O(n^{3})$ ( $O(n^{2.373})$ in \cite{narayanan_solving_2021,lee_efficient_2015}), making it impractical for large-scale networks. For example, in a fully connected network with 150 nodes, assuming four paths per SD, LP requires solving for $4 \times 150 \times 149 = 89,400$ variables. This leads to substantial memory usage and long computational times.	Commercial solvers attempt to accelerate computations by launching multiple threads, each running a different optimization algorithm independently. The solver then selects the solution from the fastest-converging algorithm. However, their acceleration relies on executing multiple optimization methods in parallel and selecting only the fastest one, which inherently limits performance improvements.

\noindent\textbf{DL-based direct methods.} 
DL approaches, such as DOTE \cite{perry_dote_nodate} and Figret \cite{liu_figret_2023}, have been introduced to accelerate TE using MLU as the loss function. Although these methods demonstrate efficiency in limited-scale DCNs, their performance deteriorates significantly at larger scales. For example, in the same scenario of 89,400 variables, the DL model must output all variables in the output layer, which greatly hampers its generalization due to the "curse of dimensionality" \cite{koppen_curse_2000}. This constraint makes DL-based direct methods ill-suited for scaling up to large network sizes.

\noindent\textbf{LP-Based parallel accelerating methods.}
Parallel methods have emerged as promising solutions to accelerate TE
processes. For example, the POP method \cite{narayanan_solving_2021}
decomposes the optimization problem into $k$ subproblems, each preserving the network
topology but handling only a subset of demands. Similarly, NCFlow \cite{abuzaid_contracting_2021}
partitions both the demands and the network topology into $k$ distinct clusters.
These methods solve all subproblems simultaneously by invoking LP solvers and then combine their solutions to approximate an acceptable feasible solution. Increasing
$k$ can significantly reduce computational time, but this comes at the cost
of degrading TE performance due to the coupling between subproblems.
This trade-off between computation time and solution quality is a critical limitation
of parallel LP-based approaches.

\noindent\textbf{DL-based parallel accelerating methods.} 
To alleviate the “curse of dimensionality” in DL methods, Teal \cite{xu_teal_2023} was introduced. Similar to POP, Teal utilizes a shared policy network to independently compute split ratios for each demand. Additionally, Teal incorporates a multi-agent reinforcement learning (MARL) strategy to manage coupling among demands. Despite its advancements, the efficacy of Teal is significantly dependent on the correlation between historical and future traffic matrices and the generalizability of the shared policy network. These factors may result in degradation within complex network environments.

\subsection{Accelerate TE with Sequential Strategy}
% 要不要加不需要依赖历史数据

Due to the difficulty of parallel strategies in addressing the coupling between subproblems, we propose a sequential strategy to optimize traffic allocation. By decomposing the problem into subproblems, each modifying the split ratios for a specific SD, and determining an appropriate solving order, the sequential strategy has the potential to achieve higher-quality traffic allocations compared to parallel strategies.

\noindent\textbf{Better handling of subproblem coupling.}
Unlike parallel methods that solve subproblems simultaneously but struggle with global coherence, our approach addresses subproblems sequentially, with each decision based on the previous one. This allows each subproblem to progressively capture the overall state of the network.  By structuring the solving order, the sequential approach better accounts for subproblem coupling, leading to better performance than parallel methods.

\noindent\textbf{Direct inheritance of existing algorithm Results.} 
Due to the monotonic nature of the proposed sequential algorithm, when initialized with a TE configuration derived from other methods, the resulting performance will always be at least as good as the original configuration. This ensures compatibility with previous approaches while enabling further improvement.

\noindent\textbf{Leveraging all available computing time.} The adjustment cycles for split ratios vary significantly across different networks, ranging from 10 seconds to 15 minutes, posing challenges for TE. LP-based parallel approaches require selecting $k$, the number of subproblems, to fit within the given cycle. However, a smaller $k$ improves precision but increases complexity, potentially exceeding the adjustment cycle, while a larger $k$ simplifies subproblems but sacrifices precision, degrading solution quality. Similarly, DL-based methods, while fast, inherently lack mechanisms to utilize unused computing time for further refinement. Once the solution is computed, any remaining adjustment time is left idle. In contrast, SSDO adapts seamlessly to varying adjustment cycles by performing high-frequency updates to split ratios starting from an initial feasible TE configuration. This approach ensures consistent improvement for short cycles while fully utilizing longer cycles for further refinement, enabling superior configurations under different computation time constraints.

\subsection{Key Challenges in Designing Effective Sequential Strategies}
While sequential strategies have the potential to achieve high-quality solutions, their implementation presents significant challenges. Designing an effective sequential approach requires addressing key issues related to computation time, solution consistency, and task sequencing.

\noindent\textbf{Computing efficiency for subproblems.} 
Sequential strategies solve subproblems one by one, making efficiency critical, especially when the number of subproblems is large. Although commercial solvers such as CPLEX  and Gurobi  offer efficient methods for solving optimization problems, their overhead in model construction and complex solving processes make them impractical for handling individual subproblems in a sequential framework.

\noindent\textbf{Inconsistency between subproblem and global performance.} 
Decisions made in early subproblems can constrain the solution space for later ones, potentially leading to poor global performance. This lack of coordination often results in inferior overall outcomes, requiring additional adjustments to improve global performance.

\noindent\textbf{Impact of subproblem order.} 
The sequence in which subproblems are solved significantly affects convergence speed and solution quality. While a random order can yield improvements over initial conditions, an inefficient order may slow convergence, requiring more iterations to achieve satisfactory results. Identifying an effective order is critical for improving solution quality and computational efficiency.

\section{TE MODEL}
\setlist[itemize]{leftmargin=*}
\noindent\textbf{Notations \& Definitions}: We present the recurrent mathematical notations and definitions pertaining to TE. As this work focuses on TE in DCNs, we focus on one-hop and two-hop transit paths, which suffice for most DCN deployments \cite{zhang_gemini_2021,poutievski_jupiter_2022}. The generalization to multi-hop scenarios, typical of WANs, is given in Appendix \ref{Notation for multi}.
\begin{itemize}
\item \textbf{Network.} The network topology is a graph $G = (V, E, c)$, with $V$ as vertices, $E$ as edges and $c_{ij}$ specifying the sum of capacities from vertices $i$ to $j$.

\item \textbf{Traffic demands.} The Demand matrix, denoted as $D$, is a $|V| \times |V|$ matrix where each element $D_{ij}$ represents the traffic demand from source $i$ to destination $j$.

\item \textbf{TE configuration.} TE configuration $\mathcal{R}$ outlines the split ratio, indicated as $f_{ikj}$, which expresses the proportion of traffic from the source $i$ to the destination $j$ that crosses an intermediary node $k$. This 3D matrix compactly encodes split-ratio information, representing the distribution of traffic across routing paths. Formally:
\begin{itemize}
    \item $f_{ikj}$: Represents the fraction of traffic from $i$ to $j$ that follows a two-hop path through $k$, where $i \neq k \neq j$.
    \item $f_{ijj}$: Denotes the fraction of traffic directly routed from $i$ to $j$ (1-hop path), where $i \neq j$.
    \item $f_{iij}$ and $f_{iki}$: Since the direct path is already captured by $f_{ijj}$, and self-traffic is not considered, $f_{iij} = f_{iki} = 0$.
\end{itemize}
This 3D matrix stores split ratio information densely, providing a strong basis for future calculations.

\item \textbf{Path set.} Practical TE systems typically constrain the set of paths available between SDs due to network topology or operational policies. The path set, denoted as $\mathcal{P}$, represents all permissible routing paths. Each element of $\mathcal{P}$ is an ordered triad of nodes, such as $(s, k, d)$, representing a valid path between source $s$ and destination $d$ via intermediate node $k$. If traffic follows a direct path, we set \( k = d \). For a given $(s, d)$, we define $\mathcal{K}_{sd}$ as the set of intermediate nodes $k$ associated with the paths in $\mathcal{P}$. Specifically, $\mathcal{K}_{sd} = \{k \mid (s, k, d) \in \mathcal{P}\}$.

\item \textbf{TE objective.} The objective function explored in this study aims to minimize MLU, denoted $u$, a metric widely used in TE \cite{azar_optimal_2003,benson_microte_2011,chiesa_lying_2016,poutievski_jupiter_2022,valadarsky_learning_2017}.  It effectively encapsulates both throughput and resilience to traffic fluctuations. MLU is defined as $\max_{i,j\in V}{(\sum_{k\in V} f_{ijk} \cdot D_{ik} + \sum_{k\in V} f_{kij} \cdot D_{kj})/c_{ij}} $, which is calculated by the given Demand matrix $D$ and the TE configuration $\mathcal{R}$.

\end{itemize}
\noindent\textbf{Optimization model of TE}:
The TE problem can be formulated as a linear programming (LP) problem, where the goal is to determine the optimal split ratios to minimize MLU while satisfying flow conservation constraints. The optimization model is defined as Equation (\ref{eq:TE}).
\begin{equation}
\begin{aligned}
    & \min_{f_{ikj}\in \mathcal{R}} u \\
    & \text{s.t.} \quad 
    \begin{cases}
        f_{ikj} \geq 0, \quad f_{iki} = 0, \quad f_{iij} = 0, & \forall i, j, k\in V, \\
        f_{ikj}=0,  &\forall (i,k,j)\notin \mathcal{P},
        \\
        \sum_{k\in V} f_{ikj} = 1, & \forall i \neq j \in V, \\\frac{\sum_{k\in V} f_{ijk} \cdot D_{ik} + \sum_{k\in V} f_{kij} \cdot D_{kj}}{c_{ij}}
         \leq u   , & \forall i \neq j \in V.
    \end{cases}
\end{aligned}
\label{eq:TE}
\end{equation}

\section{SSDO Design }
\subsection{Overview}
\begin{figure}[b]
    \centering
    \includegraphics[width=1\linewidth]{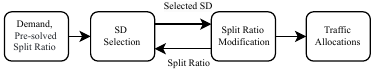}
    \caption{Workflow of SSDO.}
    \label{fig:Work flow of SSDO}

\end{figure}

As illustrated in Figure \ref{fig:Work flow of SSDO}, SSDO takes predetermined split ratios and traffic demand as input. The \emph{SD Selection} component identifies SDs based on the current split ratios and traffic demands. The \emph{Split Ratio Modification} component then optimizes the split ratios for the selected SD. This iterative process alternates between \emph{SD Selection} and \emph{Split Ratio Modification}. With each iteration, the system’s MLU progressively decreases, ultimately converging to a high-quality solution.

For a given SD $(s,d)$, the \emph{Split Ratio Modification} component formulates a subproblem with  $f_{skd}, \forall k \in \mathcal{K}_{sd}$  as decision variables, while keeping other split ratios fixed.  Instead of solving it as an LP problem, SSDO reformulates it as a structured search problem, significantly reducing computational complexity. LP solvers rely on costly matrix operations and iterative constraint satisfaction, often requiring  $O(n^{3})$  complexity for large-scale problems. In contrast, SSDO employs a binary search algorithm, which converges in logarithmic time with only a few function evaluations, avoiding the high overhead of traditional optimization techniques. To ensure subproblem solutions align with global TE performance, SSDO selects the most balanced solution among the subproblem’s optima. A detailed description is provided in \S\ref{Split Ratio Modification}.

The \emph{SD Selection} component in SSDO identifies the set of edges with maximal utilization, determined by the split ratios and demands. It then locates the associated SDs and provides them to the \emph{Split Ratio Modification} component.  Without a well-designed selection procedure, the process could converge slowly or settle into inferior local optima. SSDO’s carefully crafted rules largely avert these pitfalls, as we detail in \S\ref{SD selection}.

In addition, SSDO can be initiated with any feasible presolved split ratios. A potential approach to constructing this solution is to route each SD’s demand entirely along one of its available paths. All components of SSDO are meticulously designed, necessitating only basic matrix operations of addition and multiplication. SSDO does not require historical data or significant computational resources, making it straightforward to program and implement.

\subsection{Split Ratio Modification Component}\label{Split Ratio Modification}

\textbf{Subproblem definition.} In this section, we focus on an LP subproblem of TE. In the subproblem, only the split ratios related to the selected SD are subjected to optimization, while all other split ratios remain constant, which is called subproblem optimization (SO). To elucidate the fundamental concept of SO, the process is illustrated in Figure \ref{fig:SO-illustration}. Within this network, there are three SDs: $(A,B)$, $(B,C)$, and $(A,C)$. The initial TE scheme routes all traffic along the shortest paths, resulting in an MLU of  $\max\{1, 0.5, 0.5\} = 1$ , which occurs at the edge  $A \rightarrow B$ . By altering the split ratios for $(A,B)$ and maintaining those for $(B,C)$ and $(A,C)$ unchanged, the MLU transitions to $\max\{0.75,0.75,0.5,0.25\}=0.75$. In particular, 0.75 represents the minimum MLU achievable in this system under the given traffic pattern.
\begin{figure}[t]
    \centering
    \includegraphics[width=1\linewidth]{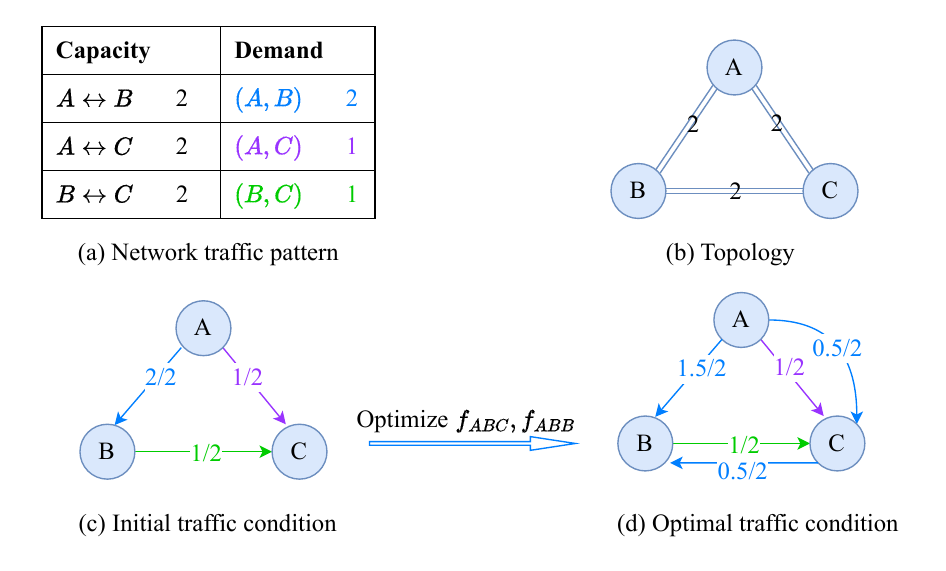}
    \caption{A sample illustration of the subproblem optimization (SO). Notations like “1/2” on the edges mean that the flow through the edge is 1 and the capacity of the edge is 2. In this example, only one SO is required for the SSDO algorithm. In initial TE scheme, $f_{ABB}=100\%$, $f_{ACB}=0\%$, $f_{ACC}=100\%$, $f_{ABC}=0\%$, $f_{BCC}=100\%$, $f_{BAC}=0\%$. After SO process, $f_{ABB}$ change to 75\%, $f_{ACB}$ change to 25\%.}
 
    \label{fig:SO-illustration}
\end{figure}

\noindent \textbf{Subproblem characteristics.} Compared to the original problem like Equation (\ref{eq:TE}), the SO problem of given SD $(s,d)$ requires optimizing only the $|\mathcal{K}_{sd}|$ split ratios, significantly simplifying the problem. From a programming perspective, the SO problem remains an LP problem. Fortunately, it has some unique characteristics that can be further leveraged to simplify the calculation.

\textbf{Characteristic 1: \textit{Without solving SO, the feasibility of a given MLU \( u_0 \) can be analytically judged.}}

The feasibility of a given MLU \( u_0 \) can be judged without solving SO. This process is illustrated in Figure \ref{fig:judgement} and involves the following steps:
\begin{enumerate}
    \item \textbf{Background traffic computation:} Suppose that the selected SD is designated as $(s, d)$ . Background traffic from node $i$ to node $j$, denoted as $ Q_{ij}$, can be determined by setting $f_{skd} = 0$ for all $k \in V$, as in Equation (\ref{eq:Qij1}). The calculation example is shown in (b) of Figure \ref{fig:judgement}.
\begin{equation}
    Q_{ij} = 
    \begin{cases} 
        \sum\limits_{k \in V} f_{ijk} \cdot D_{ik} + \sum\limits_{k \in V} f_{kij} \cdot D_{kj},&i \ne s, j \ne d\\
        \sum\limits_{k \in V/d} f_{ijk} \cdot D_{ik} + \sum\limits_{k \in V/s} f_{kij} \cdot D_{kj},&i = s, j = d \\
        \sum\limits_{k \in V/d} f_{ijk} \cdot D_{ik} + \sum\limits_{k \in V} f_{kij} \cdot D_{kj},&i = s, j \ne d \\
        \sum\limits_{k \in V} f_{ijk} \cdot D_{ik} + \sum\limits_{k\in V/s} f_{kij} \cdot D_{kj},&i \ne s, j = d
    \end{cases}
    \label{eq:Qij1}
\end{equation}

\item \textbf{Residual capacity calculation:} For a given path $s \rightarrow k \rightarrow d$, the residual capacity $T_{skd}$ is computed using Equation (\ref{eq:Tikj}). Here, $T_{skd}$ represents the maximum remaining capacity of the path, calculated by the background traffic $Q$ and the given $u_0$. Based on this residual capacity, the upper bound of the split ratio through $k$, denoted as ${\bar{f}}_{skd}$, is derived using Equation (\ref{eq:fikj}).
\begin{equation}
    T_{{s}k{d}} = 
    \begin{cases} 
        \min \left\{
        \begin{array}{l}
            {u_0}{c_{{s}k}} - {Q_{{s}k}}, \\
            {u_0}{c_{k{d}}} - {Q_{k{d}}}
        \end{array} 
        \right\}, & k\in \mathcal{K}_{sd},k \neq {d}, \\
        {u_0}{c_{{s}{d}}} - {Q_{{s}{d}}}, & k = {d}
    \end{cases}
    \label{eq:Tikj}
\end{equation}
\begin{equation}
    \bar{f}_{skd} = \frac{T_{skd}}{D_{sd}},
    \label{eq:fikj}
\end{equation}

\item \textbf{Feasibility assessment:} Drawing from the preceding analysis, the feasibility of SO can be evaluated through the following metrics. \begin{itemize}
        \item If \( {\sum_{k\in \mathcal{K}_{sd}} \bar{f}}_{skd} \geq 1 \) and \( {\min_{k\in \mathcal{K}_{sd}} \bar{f}}_{skd} \geq 0 \), there is a feasible solution. In this case, \( \bar{f}_{skd} \) can be normalized to determine the solution, as shown in Figure \ref{fig:judgement}.
        \item If \( {\sum_{k\in \mathcal{K}_{sd}} \bar{f}}_{skd} < 1 \) or \( {\min_{k\in \mathcal{K}_{sd}} \bar{f}}_{skd} < 0 \), the given \( u_0 \) lies outside the feasible domain.
    \end{itemize}

\end{enumerate}

\begin{figure}[ht]
    \centering
    \includegraphics[width=1\linewidth]{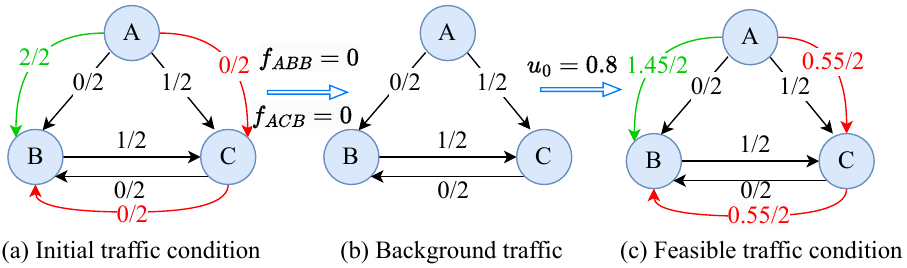}
    \caption{Illustration of judgment process of SO proposed in Figure \ref{fig:SO-illustration} when $u_0=0.8$, $D_{AB}=2$. The green and red lines represent the traffic flows on $A\rightarrow B$ and $A\rightarrow C\rightarrow B$ .  Let $f_{ABB}=f_{ACB}=0$, background traffic $Q$ is calculated in (b). Using background traffic, $T_{ACB}=\min{\{2\times0.8-1,2\times0.8-0\}}=0.6$, $T_{ABB}=0.8\times2=1.6$, ${\bar{f}}_{ACB}=0.6/2=0.3$, ${\bar{f}}_{ABB}=1.6/2=0.8$. To satisfy the normalization constraint, $f_{ACB},f_{ABB}=0.3/\left(0.8+0.3\right),0.8/\left(0.8+0.3\right)$, a feasible solution having been obtained in (c).}
   
    \label{fig:judgement}
\end{figure}
% Requires: \usepackage{amsmath}

\textbf{Characteristic 2: \textit{The optimal MLU \( u^\ast \) in SO can be determined by binary search.}}

Based on the analysis above, the upper bound of the split ratio ${\bar{f}}_{skd}$ is fundamentally related to the MLU parameter $u$. As rigorously proven in Appendix \ref{appendix:monotonicity}, each individual $\bar{f}_{skd}(u)$ is a nondecreasing function of $u$. This component-wise monotonicity implies that for any intermediate node $k \in {\mathcal{K}_{sd}}$, we have:
\begin{equation}
    {\bar{f}}_{skd}(u) \geq {\bar{f}}_{skd}(u_0)\quad \text{whenever }  u\ge u_0.
    \label{eq:inequality}
\end{equation}
The aggregation of these monotonic components preserves the nondecreasing property. Specifically, summing over all possible paths $k \in {\mathcal{K}_{sd}}$ yields:
\begin{equation}
    \sum_{k \in \mathcal{K}_{sd}}{\bar{f}}_{skd}(u) \geq \sum_{k \in \mathcal{K}_{sd}}{\bar{f}}_{skd}(u_0)\quad \text{whenever }  u\ge u_0.
    \label{eq:sum_inequality}
\end{equation}
This monotonicity ensures that if \( u_0 \) is feasible, then all \( u \geq u_0 \) are also feasible. Conversely, if \( u_0 \) is infeasible, then all \( u \leq u_0 \) are also infeasible.

To perform a binary search, we must define the lower and upper boundaries \( u^{lb} \) and \( u^{ub} \), which ensure a bounded search space. These boundaries are given as Equation (\ref{eq:ulb}) and Equation (\ref{eq:uub}). \( u^{lb} \) represents the minimum possible MLU, below which the solution becomes infeasible. Specifically, for \( u < u^{lb} \), the split ratio \( {\bar{f}}_{skd} \) would become negative, violating the feasibility conditions. \( u^{ub} \) provides the maximum feasible MLU under initial conditions before modification. This ensures that any feasible solution must lie within \( [u^{lb}, u^{ub}] \).
\begin{equation}
    u^{lb} = \max_{i,j \in V} \frac{Q_{ij}}{c_{ij}},
    \label{eq:ulb}
\end{equation}
\begin{equation}
    u^{ub} = \max_{i,j \in V} \frac{\sum_{k\in V} f_{kij} \cdot D_{kj} + \sum_{k\in V} f_{ijk} \cdot D_{ik}}{c_{ij}}.
    \label{eq:uub}
\end{equation}

With these boundaries established, we can conclude that there exists a threshold \( u^\ast \in [u^{lb}, u^{ub}] \) such that \( u^\ast \) is the optimal MLU. The monotonicity of \( {\bar{f}}_{ikj}(u) \) further guarantees the correctness of the binary search within this range. Thus, the above analysis ensures that the binary search can determine not only feasible but also optimal MLU \( u^\ast \) in SO.

\textbf{Characteristic 3: \textit{For the optimal MLU \( u^\ast \), there exist multiple feasible TE configurations, but only one balanced TE configuration which can be binary searched.}}

As illustrated in Figure \ref{fig:multi-solution}, the multi-solution phenomenon for split ratios occurs if and only if the optimal MLU $u^\ast$ equals the lower bound u$^{lb}$. Otherwise ($u^\ast > u^{lb}$), the solution is unique. Under this specific condition, multiple sets of split ratios can achieve the same  $u^\ast$, resulting in ambiguity in the solution. To address this issue and better coordinate the performance of SO and origin optimization, we introduce ‘balance’ as a secondary objective in SO. The balanced solution is formulated to satisfy the following two key conditions.
\begin{itemize}
    \item For each path with non-zero split ratios, the maximum utilization of its edges equals a fixed value $u^e $.
    \item For each path with zero split ratios, the maximum utilization of its edges exceeds or equals  $u^e$ .
\end{itemize}

An example of this balanced solution is shown in (c) of Figure \ref{fig:multi-solution}.For (A,B), \( f_{ACB} \) and \( f_{ADB} \) are greater than zero, the maximum utilization of the paths \( A \rightarrow C \rightarrow B \) and \( A \rightarrow D \rightarrow B \) is equal to 0.55, satisfying the first condition. Furthermore, the maximum utilization of \( A \rightarrow B \) exceeds 0.55, fulfilling the second condition. In contrast, an alternative solution shown in (d) of Figure \ref{fig:multi-solution} fails to meet the second condition, as the maximum utilization of the paths $A\rightarrow D \rightarrow B$ does not exceed the threshold \( u^e \), highlighting its imbalance in this scenario. By ensuring that the balanced solution meets these conditions, it not only resolves the ambiguity caused by the multisolution phenomenon, but also guarantees a more balanced distribution of traffic across paths.

The introduction of \( u^e \) provides significant benefits in the optimization process.
\begin{itemize}
    \item \textbf{Providing more optimization potential.} Without \( u^e \), the SO process cannot effectively determine which solution among multiple feasible configurations is optimal for the overall TE objective. Blindly increasing the traffic on certain edges may severely restrict the optimization space for subsequent SDs, leading the algorithm to converge on inferior solutions. By balancing capacity utilization across edges, \( u^e \) helps avoid such pitfalls. Although it may result in a time cost for finding \( u^e \), this balanced approach ensures that the solution space remains flexible, preventing the algorithm from being trapped in poor feasible configurations.

\item \textbf{Seamless integration into the binary search framework.} 
Like the optimal MLU $u^\ast$, the balanced MLU $u^e$ can also be determined via binary search. 
The key difference is that $u^e$ is based on the modified upper bounds:
\begin{equation}
    \bar f_{skd}^b(u) = \max\{0, \bar f_{skd}(u)\}, 
    \label{eq:fbarb}
\end{equation}
 which exclude negative split ratios. 
The search space of $u^e$ is bounded between 0 and $u^{ub}$ (the same upper bound as $u^\ast$). 
In single-solution cases, $u^e$ coincides with $u^\ast$; in multi-solution cases, 
binary search on $u^e$ directly yields the balanced solution. 
This formulation transforms SO into a binary search problem while ensuring both computational efficiency and robustness.
\end{itemize}

\begin{figure}[t]
    \centering
    \includegraphics[width=0.92\linewidth]{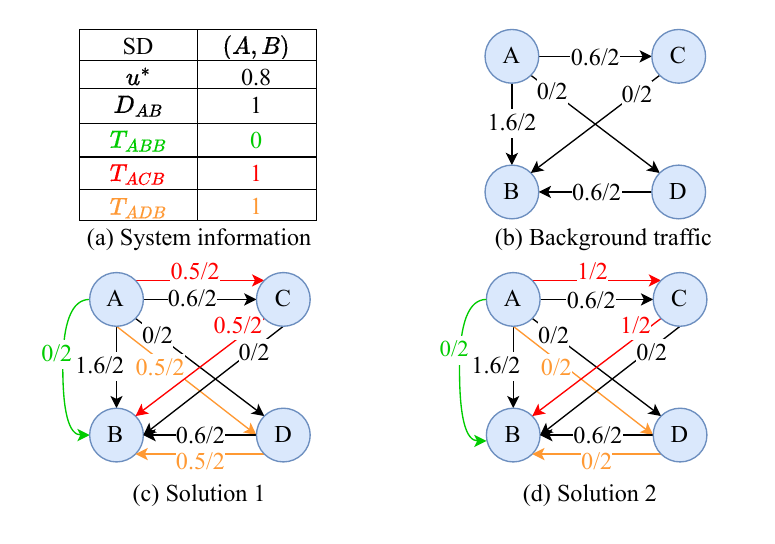}
    \caption{An illustration of the multi-solution phenomenon of SO. In this SO, using multiple split ratios will obtain the same MLU.}
 
    \label{fig:multi-solution}
\end{figure}

\noindent \textbf{Balanced binary search method for SO.} To efficiently solve the SO problem, we propose a balanced binary search algorithm (BBSM), as detailed in Algorithm \ref{ag:BBSM}. The algorithm is designed to leverage the characteristics of the problem for improved computational efficiency. Specifically, apart from the initialization step, all operations within BBSM have a time complexity of \( O(|V|) \). The binary search process is controlled by a threshold \( \epsilon \), typically set to \( 10^{-6} \), ensuring convergence within approximately \( \log_2(1 / \epsilon) = 20 \) iterations.

For the initialization phase, if the method in Equation (\ref{eq:Qij1}) is applied to calculate \( Q \), the time complexity reaches \( O(|V|^3) \). However, in practice, this complexity can be reduced to \( O(|V|) \) by maintaining a utilization matrix and updating the corresponding path utilization dynamically based on the selected SD. This practical implementation significantly reduces computational time overhead.

In contrast to the linear programming approach, whose computational complexity scales as $O(n^{3}) \approx O(|V|^{9})$ in a fully connected network, and which does not explicitly prioritize among multiple equally optimal solutions, the proposed BBSM demonstrates superior performance. With its lower computational complexity and its ability to identify well-balanced solutions among multiple feasible options, BBSM provides a more efficient and robust approach to the SO problem, particularly in large-scale networks.

\begin{algorithm}[t]
    \SetAlgoLined
    \KwIn{ $c$, $f_{skd}$, $s$, $d$, $D$.}
    \KwOut{Updated split ratio $f_{skd}$.}
    \BlankLine
    Initialize $Q$,  $u^{ub}$, $\underline u \gets 0$,
    $\bar{u} \gets u^{ub}$, continue $\gets$ \textbf{TRUE}; \\
    
    \While{continue}{
        $u \gets \frac{\bar{u} + \underline u}{2}$; \\
        Calculate $\bar{f}_{skd}^b(u)$\\
        \If{$\sum_{k \in V} \bar{f}_{skd}^b(u) \geq 1$}{
            $ \bar u \gets u$; \\
        }
        \Else{
            $\underline{u} \gets u$; \\
        }
        \If{$|\bar{u} - \underline {u}| < \epsilon$}{
            continue $\gets$ \textbf{FALSE}; \\
        }
    }
    Set $u \gets \bar{u}$; \\
    Set $f_{skd} \gets \bar{f}_{skd}^b(\bar u)$; \\
    \Return{$f_{skd}$};
    
    \caption{Balanced Binary Search Method (BBSM)}
    \label{ag:BBSM}
\end{algorithm}

\subsection{Detail of SSDO}\label{SD selection}

The \emph{SD Selection} component plays a critical role in determining the sequence of SDs for the \emph{Split Ratio Modification} component. A naive approach is to traverse all SDs in a fixed order. However, this is inefficient because many SDs have no impact on MLU, meaning their split ratios can be adjusted without affecting the optimization goal. As a result, computational resources are wasted on updating them.

To address this inefficiency, we leverage the mathematical relationship between MLU and SDs. As shown in Equation (\ref{eq:u}), the utilization rate of a link \(i\to j\) is influenced by up to \( 2|V| - 3 \) SDs. This implies that focusing on the SDs associated with the edges exhibiting the highest MLU can effectively reduce the MLU without needing to process all SDs. If any specific SD is restricted from using this link, it can simply be excluded from the calculation. 
\begin{equation}
    u_{ij}=\frac{\sum_{k\in V} f_{ijk} \cdot D_{ik} + \sum_{k\in V} f_{kij} \cdot D_{kj}}{c_{ij}}. 
    \label{eq:u}
\end{equation}

Based on this insight, the collaborative workflow of the \emph{SD Selection} and \emph{Split Ratio Modification} components is designed to prioritize efficiency.
\begin{enumerate}
    \item \textbf{\emph{SD Selection} component.} The \emph{SD Selection} component identifies the edges demonstrating the highest utilization. Subsequently, it calculates the SDs associated with these edges and organizes them into a processing queue using a specified prioritization rule (e.g., frequency of occurrence).
    \item \textbf{\emph{Split Ratio Modification} component.} The \emph{Split Ratio Modification} component processes the SDs in the queue one by one, adjusting their split ratios using BBSM to reduce MLU.
    \item \textbf{Termination check.} After processing all SDs in queue, SSDO evaluates whether the MLU has decreased. If the amount of MLU reduction is less than $\epsilon_0$, the algorithm terminates. Otherwise, the \emph{SD Selection} component recalculates the SD queue.
\end{enumerate}

The detailed steps of SSDO are summarized in Algorithm \ref{ag:SSDO}, which illustrates the interaction between two components. This collaborative design ensures that computational resources are focused on the most critical SDs, thereby improving the overall efficiency of the algorithm.

\begin{algorithm}[t]
    \SetAlgoLined
    \KwIn{$c$, $D$.}
    \KwOut{Optimized split ratios.}
    \BlankLine
    Initialize split ratios and calculate the utilization; \\
    Set continue $\gets$ \textbf{TRUE}; \\
    
    \While{continue}{
        Obtain the sequence of SDs using \emph{SD Selection} component; \\
        \For{each SD in the obtained sequence}{
            Call the \emph{Split Ratio Modification} component to update the split ratio; \\
            Update utilization;\\
        }

        \If{$\text{opt} -\max_{i,j\in V}u_{ij}  \leq \epsilon_0$}{
            continue $\gets$ \textbf{FALSE}; \\
        }
        \Else{        Update opt:
        \(\text{opt} \gets \max_{i,j\in V}u_{ij};\)
        
       }
    }
    \Return{Optimized split ratios.}
    \caption{Sequential Source-Destination Optimization (SSDO)}
    \label{ag:SSDO}
\end{algorithm}

\subsection{SSDO Deployment Strategies}
\noindent \textbf{Initialization modes.}  \label{initializaiton}
SSDO supports two initialization modes: hot-start and cold-start. In hot-start mode, SSDO uses TE configurations generated by other algorithms as the initial split ratios. The MLU in SSDO does not increase during the optimization process, guaranteeing that the solution quality is at least as good as the initial configuration. In the cold-start mode, SSDO initializes configurations according to predefined rules. Among various methods tested, directing all demands along the shortest path is identified as the most effective strategy due to its flexibility for subsequent optimization. Unless otherwise stated, all experiments in this paper adopt this cold-start method. For real-world deployment, a hybrid approach can be adopted: both hot-start and cold-start SSDO can be executed in parallel, and the system selects the best solution when the time limit is reached.

\noindent \textbf{Early termination.}  
SSDO achieves rapid MLU improvements during the early stages of optimization, making early termination a practical strategy, particularly in time-sensitive scenarios. This is especially effective in hot-start mode, particularly when initialized with DL-based solutions, which quickly generate feasible configurations for SSDO to refine with minimal computation. 
For deployment, an adaptive early termination mechanism can be implemented based on a predefined time threshold. This ensures that SSDO balances computation time and optimization quality efficiently.

\noindent \textbf{Path-based formulation.}  
For multi-hop scenarios (i.e., paths with three or more hops), SSDO must be extended to the path-based formulation, as detailed in Appendix~\ref{SSDO_path_form}. This formulation introduces incidence matrices to map split ratios to SDs, paths, and edges, enabling the model to capture multi-hop routing behaviors accurately. When the topology involves only one- or two-hop connections, adopting the path-based formulation is optional and can be decided based on the number of available candidate paths: if only a few exist, the path-based form can substantially reduce the problem size; otherwise, the original SSDO formulation remains preferable for its superior computational efficiency.

\section{Evaluation}\label{numerical-testing}
In this section, we present a comprehensive evaluation of SSDO. First, we outline the methodology and test system used in our experiments in \S\ref{Evaluation_1}. Next, we compare SSDO against other TE approaches, focusing on both TE quality and computational efficiency in \S\ref{Evaluation_TE}. Following this, \S\ref{Evaluation_failures} and \S\ref{demand change} evaluate SSDO’s effectiveness in managing link failures and adapting to dynamic traffic changes, respectively. Additionally, we assess the performance of SSDO on WAN in \S\ref{WAN}. The experiment about hot-start mode and early termination are detailed in \S\ref{solving time}. Finally, in \S\ref{Ablation}, we analyze the necessity of SSDO’s individual components through ablation studies.

\subsection{Methodology}\label{Evaluation_1}

\noindent\textbf{Topologies.} Our evaluation covers two types of topologies: Meta’s DCN \cite{roy_inside_2015}, including Top-of-Rack (ToR) and Point of Delivery (PoD) levels, and two WAN topologies, UsCarrier and Kdl, from the Internet Topology Zoo \cite{knightInternetTopologyZoo2011}. Shortest paths between SD pairs are precomputed using Yen’s algorithm \cite{abuzaid_contracting_2021}. Meta’s DCN topologies are modeled as complete graphs $K_n$ of sizes 4, 8, 155, and 367, corresponding to DB and WEB clusters under centralized TE. WAN topologies serve only as auxiliary demonstrations. Table \ref{tab:NETWORK TOPOLOGIES} summarizes the nodes, edges, and paths. For ToR-level DCNs, we test both per-pair 4-path limits and all-path settings.
\begin{table}[ht]
\centering
\begin{tabular}{ccccc}
\toprule
 & \#Type & \#Nodes & \#Edges & \#Paths \\

\midrule
\multirow{2}{*}{Meta DB}  & PoD-level DC & 4 & 12 & 3\\
&ToR-level DC & 155 & 23870 & 4 \\
&ToR-level DC & 155 & 23870 & 154 \\
\midrule
\multirow{2}{*}{Meta WEB}  & PoD-level DC & 8 & 56 & 7\\
 &ToR-level DC & 367 & 134322 & 4\\
 &ToR-level DC & 367 & 134322 & 366\\
 \midrule
  UsCarrier &WAN & 158 & 378 & 4\\
  \midrule
  Kdl & WAN & 754 & 1790 & 2\\

\bottomrule
\end{tabular}
\caption{Network topologies in our evaluation.}
\label{tab:NETWORK TOPOLOGIES}

\end{table}

\noindent\textbf{Traffic data.} In the study of Meta topologies, we utilize the publicly available one-day traffic trace provided by \cite{roy_inside_2015}. For the PoD-level topology, traffic traces are aggregated into 1-second snapshots of the inter-PoD traffic matrix, whereas for the ToR-level topology, aggregation is performed over 100-second intervals to generate the inter-ToR traffic matrix. For the UsCarrier and Kdl topologies from Topology Zoo, where no public traffic traces are available, we employ a gravity model \cite{applegate_making_nodate,roughan_experience_2003} to generate synthetic traffic.

\noindent\textbf{Baselines.}  
We select the following baselines to evaluate SSDO, with parameters chosen based on comprehensive considerations:  
(1) \textbf{LP-all:} Commercial LP solvers (Gurobi \cite{gurobi_optimization_llc_gurobi_2023}) directly solve TE, providing a theoretically optimal MLU.  
(2) \textbf{LP-top} \cite{namyar_minding_2022}: This method focuses on the top \(\alpha\%\) demands while routing the rest via shortest paths. Based on a trade-off between computational efficiency and solution quality, we select \(\alpha=20\) for all subsequent tests.  
(3) \textbf{POP} \cite{narayanan_solving_2021}: This method decomposes the optimization problem into $k$ subproblems, with each subproblem handling $1/k$ of the total demands while the capacity of each link is scaled down to $1/k$ of its original value. After balancing computational cost and performance, we set \(k=5\) for the evaluations.  
(4) \textbf{DOTE-m (DOTE \cite{perry_dote_nodate}, Figret \cite{liu_figret_2023}):} These methods take the traffic matrix as input and directly output the split ratios using a fully connected neural network. The models are trained with MLU as the loss function, optimizing traffic allocation to minimize congestion. In our experiments, we modify DOTE to take the current traffic matrix as input, referring to it as DOTE-m.
(5) \textbf{Teal} \cite{xu_teal_2023}: A reinforcement learning-based method using a shared policy network to allocate each SD's demands independently. The shared network significantly reduces the problem scale, making it suitable for large-scale networks.
The effectiveness of SSDO’s individual modules is assessed separately via ablation baselines in Section~\ref{Ablation}.

\noindent\textbf{Infrastructure and software.} Computational experiments are conducted on an Intel® Xeon® Platinum 8260 CPU with 1 TB of memory. Additionally, three NVIDIA GeForce RTX 4090 GPUs (each with 24 GB VRAM) are used for DL-based methods, including DOTE-m and Teal. These methods are implemented and evaluated using PyTorch 2.10, which is compatible with CUDA 12.1 \cite{paszke_pytorch_2019}. LP-based methods are evaluated using Gurobi 9.5.1 \cite{gurobi_optimization_llc_gurobi_2023}, with all reported times referring to TotalTime (including both model construction and solving, which is only marginally larger than RunTime in our setting). All implementations, including SSDO, are developed in Python 3.8.

\begin{figure*}[thbp] % 使用 figure* 环境让图片横跨两栏
    \centering
    % \captionsetup{skip=5pt} % 在特定figure*环境中调整caption和图片之间的垂直间距
    \includegraphics[width=\textwidth]{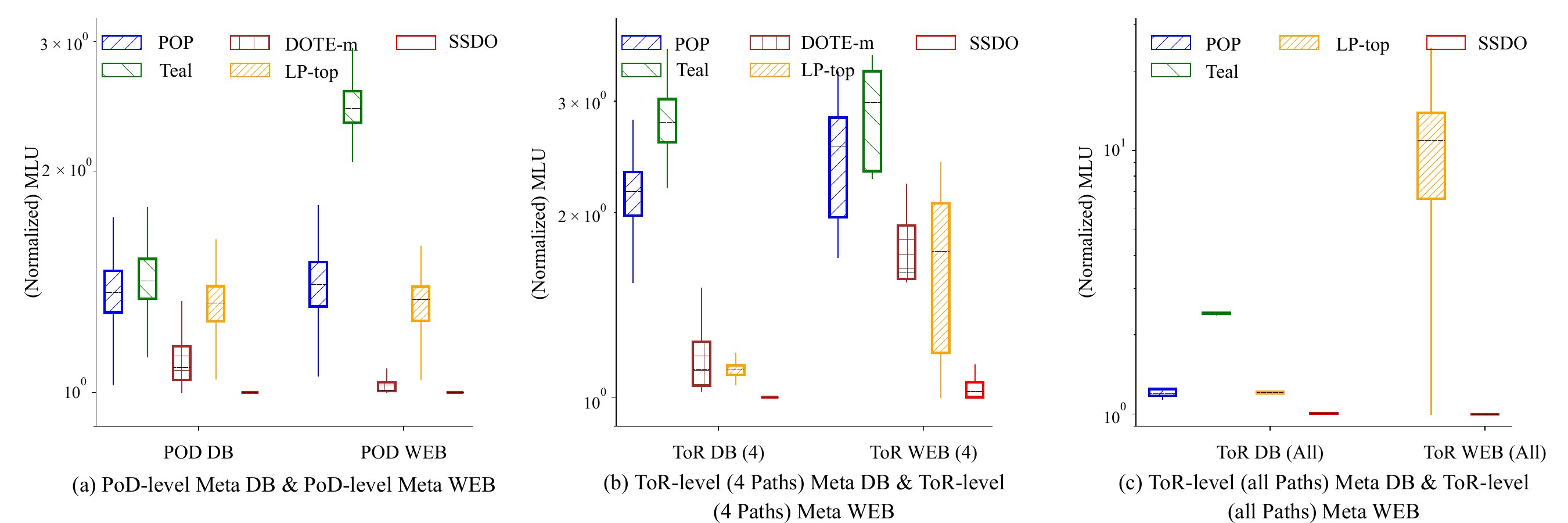} % 图片宽度设为 \textwidth
    \caption{TE quality performance of SSDO and other baseline. Methods order: POP, Teal, DOTE-m, LP-top, SSDO. In ToR-level (all paths) DB: DOTE-m failed. In ToR-level (all paths) WEB: DOTE-m, Teal and POP failed.} % 图名
    \label{fig:TE Quality} % 标签，用于交叉引用
\end{figure*}

\subsection{Comparison with Other TE Methods}\label{Evaluation_TE}

This section evaluates the TE performance and computation time across various topologies in Figure \ref{fig:TE Quality} and Figure \ref{fig:TE solver time}, focusing on normalized MLU relative to the LP-all method and computational time for each scheme. Both figures are presented on logarithmic scales for clarity. Notably, in the ToR-level Meta WEB topology (all paths), where LP-all fails to yield a feasible solution within the set time limitation (45,000 seconds), SSDO’s MLU serves as the normalization baseline. The results demonstrate SSDO’s exceptional balance between solution quality and efficiency, particularly in large-scale topologies. Key findings include:

% \setlist[itemize]{leftmargin=*}
% \noindent\begin{itemize}
 \noindent\textbf{LP-all:} Designed to provide optimal MLU solutions, LP-all serves as a benchmark for TE quality. However, its computation time increases exponentially with problem scale, becoming impractical even in medium-sized topologies. For instance, LP-all requires nearly 200s for the ToR-level Meta WEB (4 paths) topology and nearly 1,000s for the ToR-level Meta DB (all paths). In the ToR-level Meta WEB topology (all paths), LP-all fails to yield a feasible solution within time limitation, and thus its results are omitted from that topology.

\noindent\textbf{POP:} POP demonstrates unsatisfying TE performance due to its decomposition strategy, which isolates subproblems without accounting for coupling. While this approach can be effective for maximizing network flow, it is unsuitable for minimizing MLU. In the ToR-level WEB (4 paths) topology, POP’s MLU is 2.44$\times$ higher than SSDO’s. Furthermore, in the ToR-level Meta WEB topology (all paths), its solving time exceeds time limitations, making it infeasible for large-scale networks. Consequently, POP’s results are not included in Figure \ref{fig:TE Quality} or Figure \ref{fig:TE solver time} for this topology.

\noindent\textbf{LP-top:} LP-top improves upon LP-all by prioritizing the top $\alpha\%$ of demands, enabling better routing decisions for high-priority traffic. However, its simplistic handling of low-priority demands leads to unsatisfying configurations, especially in complex topologies like the ToR-level WEB (all paths), where its MLU is $10.93\times$ higher than SSDO’s. Additionally, LP-top’s computation time escalates with topology size, becoming impractical in large-scale scenarios.

\noindent\textbf{Teal:} While Teal achieves competitive efficiency in part of topologies, its TE quality remains unsatisfactory due to its design. Its shared policy structure struggles to capture the intricate demand couplings characteristic of DCNs. Moreover, Teal fails to provide feasible solutions in large-scale settings, where Video Random Access Memory (VRAM) limitations render it infeasible.

\noindent\textbf{DOTE-m:} DOTE-m quickly generates configurations in medium-scale topologies like ToR-level (4 paths). While its performance is inferior to SSDO, its fast inference speed provides an advantage. However, in large topologies, its fully connected network struggles with increased output dimensions and high VRAM consumption, limiting its scalability.

\noindent\textbf{SSDO:} SSDO achieves high-quality configurations across all tested topologies with competitive efficiency. For PoD level, despite its Python implementation, it reduces error rates below 1\% within 0.3s. For ToR-level WEB (4 paths), SSDO outperforms alternatives by reducing errors by 57\% in around 2s. In the challenging all-path Meta WEB topology, where most methods fail, SSDO completes optimization in 165s with robust accuracy. All tests use cold-start mode (\S\ref{initializaiton}). Notably, SSDO supports early termination, enabling high-quality solutions under time constraints (\S\ref{solving time}). Further improvements in implementation could enhance its performance.

% \end{itemize}

% \noindent These results confirm the adaptability and efficiency of SSDO, particularly in large-scale scenarios where existing methods encounter precision or feasibility challenges.

\begin{figure*}[thbp] % 使用 figure* 环境让图片横跨两栏
    \centering
    % \captionsetup{skip=5pt} % 在特定figure*环境中调整caption和图片之间的垂直间距
    \includegraphics[width=\textwidth]{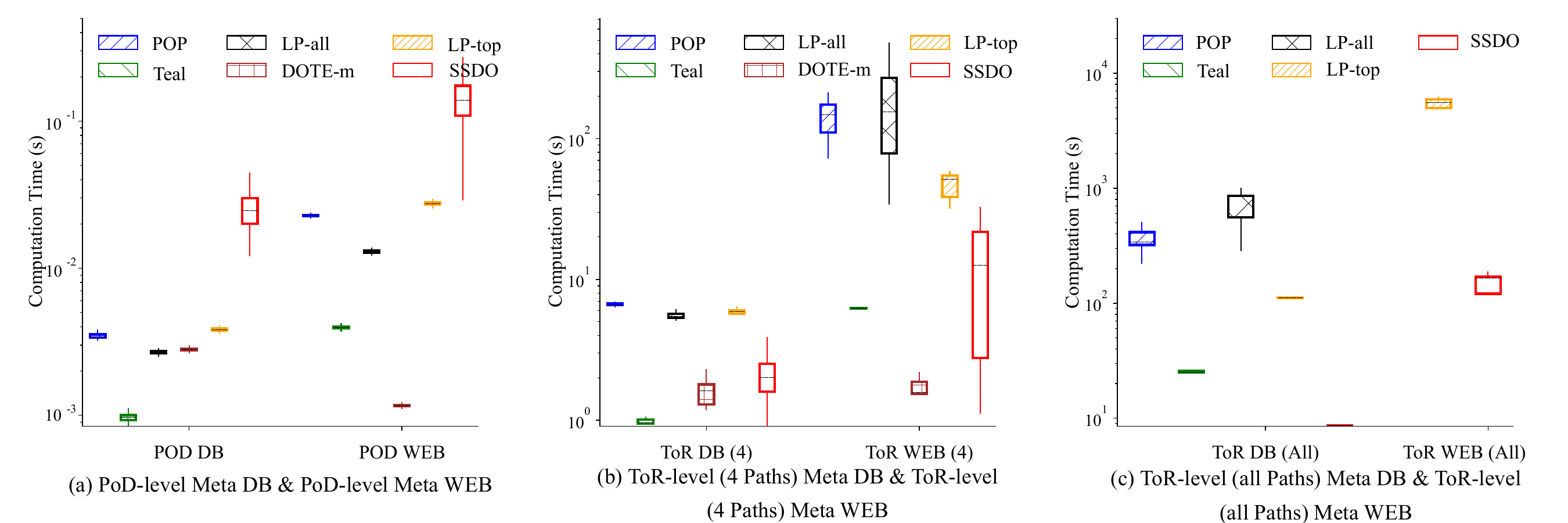} % 图片宽度设为 \textwidth
    \caption{Computation time performance. Methods order: POP, Teal, LP-all, DOTE-m, LP-top, SSDO. In ToR-level (all paths) DB: DOTE-m failed. In ToR-level (all paths) WEB: DOTE-m, Teal,LP-all and POP failed.} % 图名
    \label{fig:TE solver time} % 标签，用于交叉引用
\end{figure*}

\subsection{Coping with Network Failures}\label{Evaluation_failures}

Figure \ref{fig:failure} compares the performance of SSDO and other TE methods under different levels of random link failures in the ToR-level WEB topology (4 paths). Naturally, LP-all remains theoretically optimal even with a small number of failures, while maintaining stable MLU. Other LP-based methods exhibit poor performance, failing to meet practical requirements.

In addition, DOTE-m experiences a noticeable increase in MLU as failures grow. This degradation arises because its training data is derived from failure-free networks, rather than being intrinsic to DL-based approaches. When link failures occur, the mapping between traffic matrices and TE configurations shifts, leading to degraded performance. Indeed, recent work (e.g., Harp \cite{alqiamTransferableNeuralWAN2024}) has shown that carefully designed DL-based models can generalize to unseen topologies and even handle failures beyond the training distribution. More broadly, besides their inference speed, a key motivation for DL-based schemes is their potential to cope with prediction errors and uncertainty in traffic demands.

By contrast, SSDO achieves performance close to LP-all while maintaining strong adaptability and resilience to link failures. Unlike DL-based methods, SSDO does not require training data, making it a robust and practical choice for handling failures in dynamic network environments.

\begin{figure}[t] % 单栏图片，使用 h 表示放在当前位置
    \centering
    \includegraphics[width=0.9\linewidth]{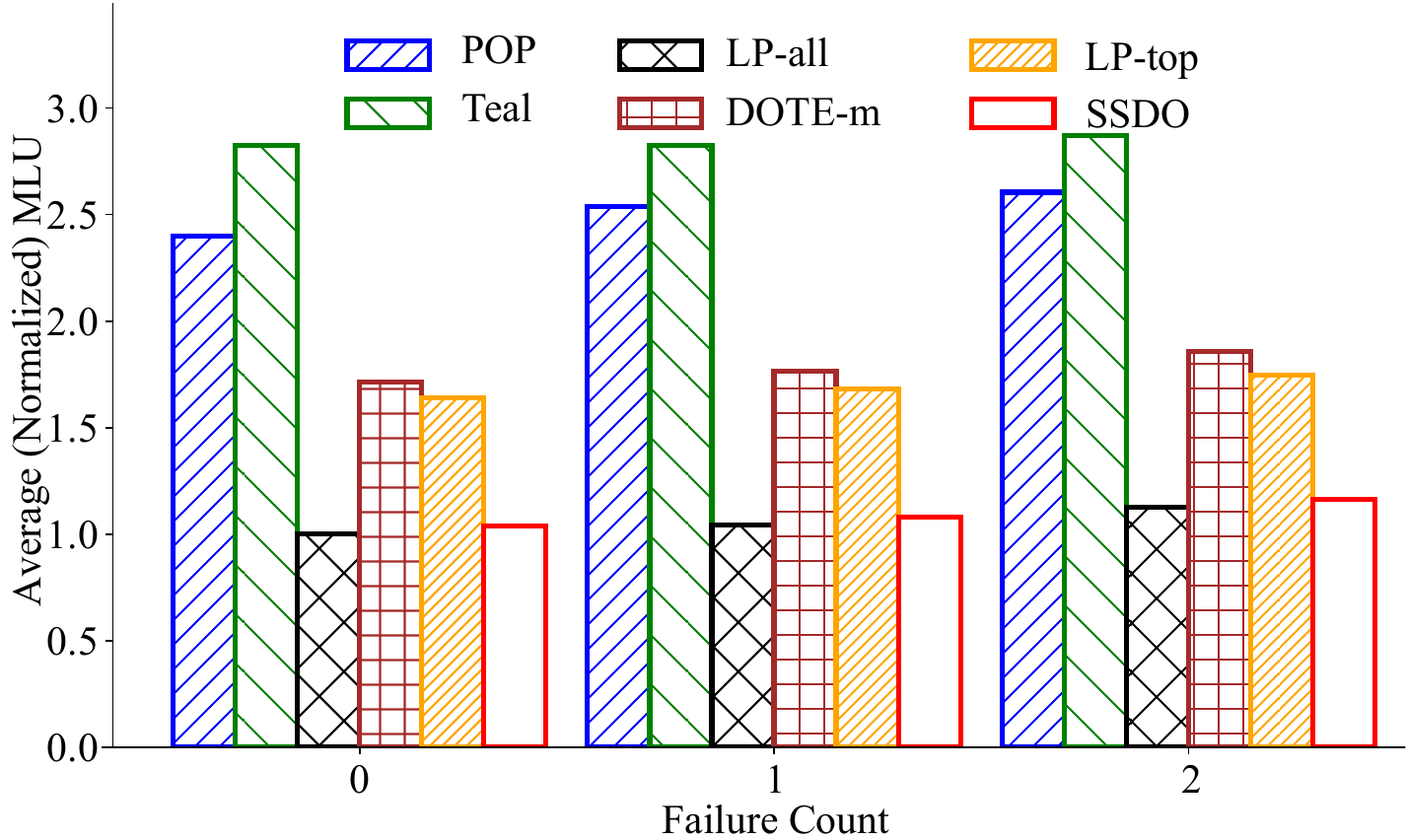} % 图片宽度占据当前栏宽的 90%
    \caption{Coping with different numbers of random link failures on ToR-level WEB (4 paths). The y-axis represents normalized MLU  using origin topology.}
    \label{fig:failure}
\end{figure}

\subsection{Robustness to Demand Changes}\label{demand change}

To assess the impact of temporal fluctuations on TE methods, we introduce different levels of variation into the traffic matrix. For each demand, we calculate the variance of its changes across consecutive time slots and scale it by factors of 2, 5, and 20. Using these scaled variances, we define zero-mean normal distributions, from which random samples are drawn and added to each demand in every time interval.

As shown in Figure \ref{fig:Temporal fluctuation}, SSDO maintains stable and high-quality performance across all fluctuation levels, demonstrating its robustness to temporal variations. LP-top and POP exhibit relatively stable performance, indicating that their optimization strategies are less sensitive to fluctuations. However, POP shows irregular variations, which stem from its  algorithmic design. Interestingly, LP-top’s performance slightly improves as fluctuations increase, likely because larger variations amplify the proportion of high-demand traffic, enabling LP-top to allocate resources more efficiently.

In contrast, DOTE-m and Teal experience a clear decline in performance as fluctuation levels increase. This degradation is likely caused by the growing discrepancy between the perturbed traffic matrices and the historical ones used for training, limiting generalization to unseen traffic patterns.

% This degradation is likely caused by the growing discrepancy between the perturbed traffic matrix and the historical traffic data.
% This degradation is likely caused by the growing discrepancy between the perturbed traffic matrix and the historical traffic data.This degradation is likely caused by the growing discrepancy between the perturbed traffic matrix and the historical traffic data.

\begin{figure}[t] % 单栏图片，使用 h 表示放在当前位置
    \centering
    \includegraphics[width=0.95\linewidth]{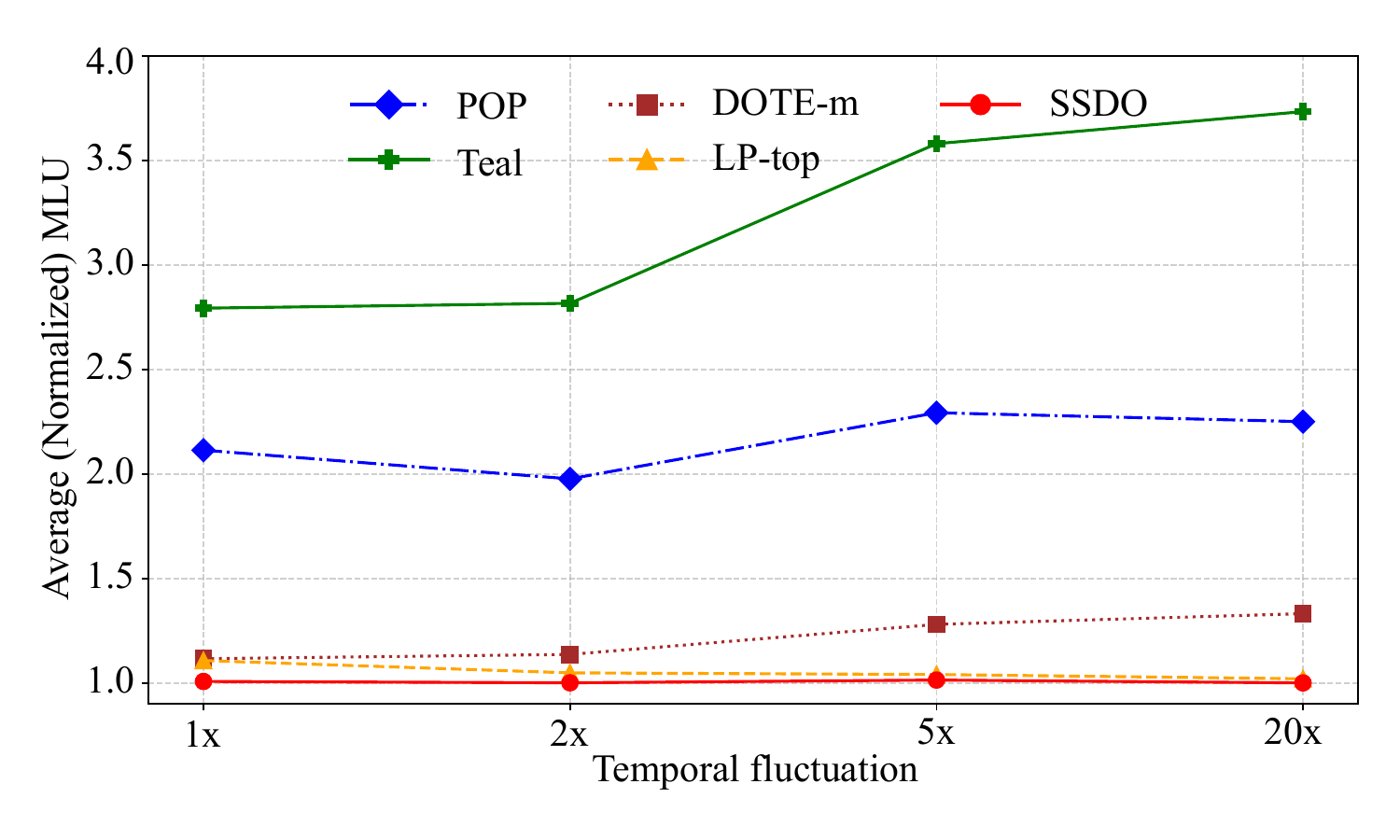} % 图片宽度占据当前栏宽的 90%
    \caption{Coping with Temporal fluctuation on Meta ToR-level DB (4 paths). The y-axis represents the MLU normalized by that of the LP-all using perturbed traffic matrix. }
    \label{fig:Temporal fluctuation}
\end{figure}

\subsection{SSDO for WAN}\label{WAN}
To demonstrate the generality of SSDO beyond DCNs, we further evaluate its path-based formulation (Appendix \ref{SSDO_path_form}) on two WAN topologies. Figure \ref{fig:SSDO for WAN} presents the results, comparing SSDO with various TE methods in terms of computation time and solution quality. SSDO consistently achieves high-quality solutions with competitive solving times, demonstrating its ability to generalize effectively to WAN settings.

In UsCarrier, SSDO achieves lower MLU than LP-based methods (POP, LP-top) while maintaining a solving time under one second, comparable to DL-based methods (DOTE-m, Teal). This efficiency highlights SSDO’s practicality in small-scale WANs. In KDL, SSDO reduces MLU by 9\% compared to DOTE-m and Teal while slightly outperforming POP. Although its solving time is marginally longer than DOTE-m, it remains significantly faster than LP-based methods. Notably, Teal’s solving time is higher than reported in prior work \cite{xu_teal_2023}, likely due to cases where it outputs all-zero split ratios, requiring additional corrections. 

\begin{figure}[bt]
    \centering
    \includegraphics[width=1\linewidth]{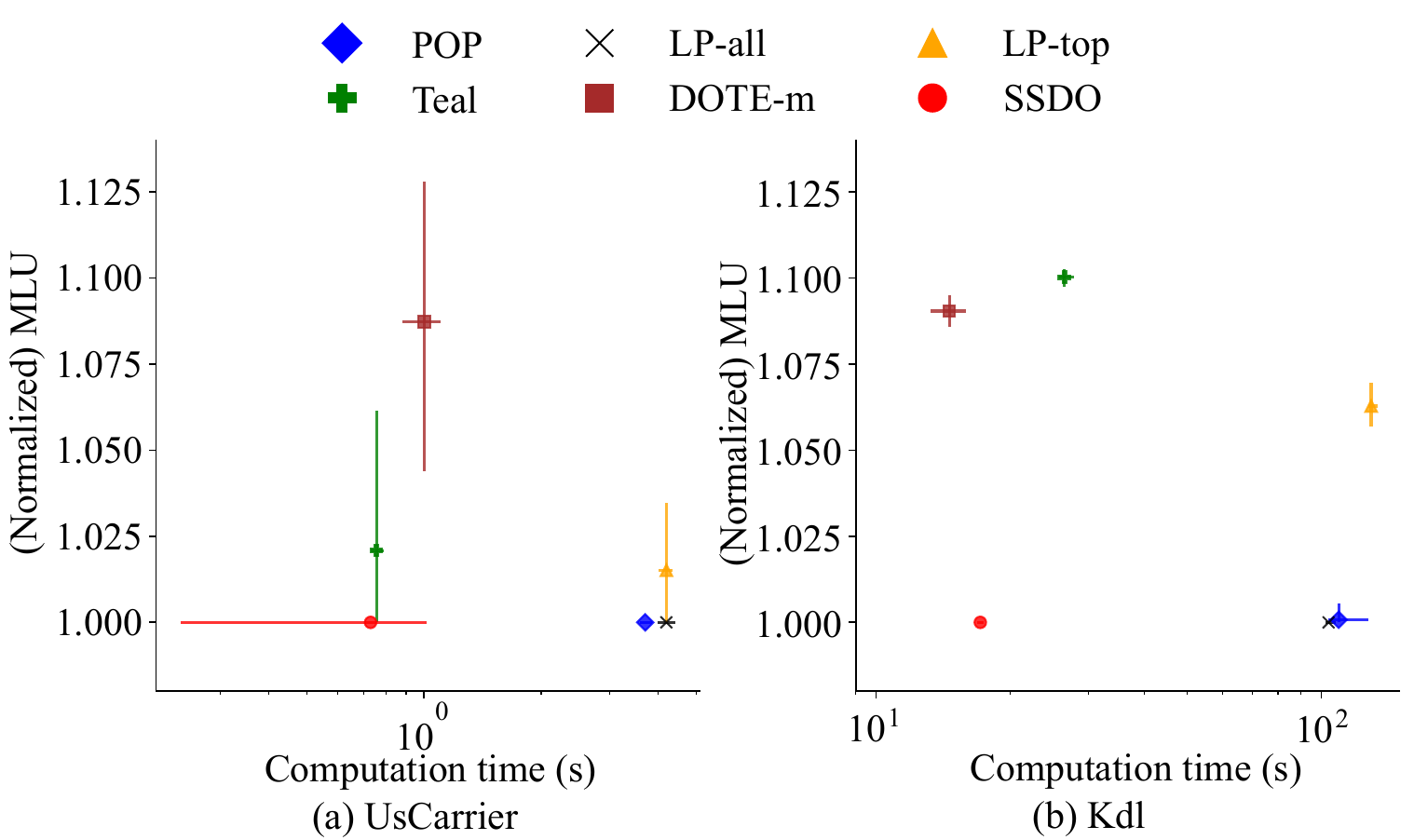}
    \caption{Performance of SSDO and baselines in WAN. The y-axis represents the normalized MLU. The x-axis represents the computation time (in seconds) on a logarithmic scale.}
    \label{fig:SSDO for WAN}
\end{figure}

\subsection{Hot-start Initialization and Early Termination in SSDO}\label{solving time}
Figure~\ref{fig:different time} shows the evolution of the MLU error relative to the optimal MLU throughout the SSDO optimization process. The y-axis represents the normalized error reduction, and the x-axis represents the normalized optimization time, ranging from 0 (start) to 1 (completion).
\begin{figure}[tbh]
    \centering
    \includegraphics[width=0.9\linewidth]{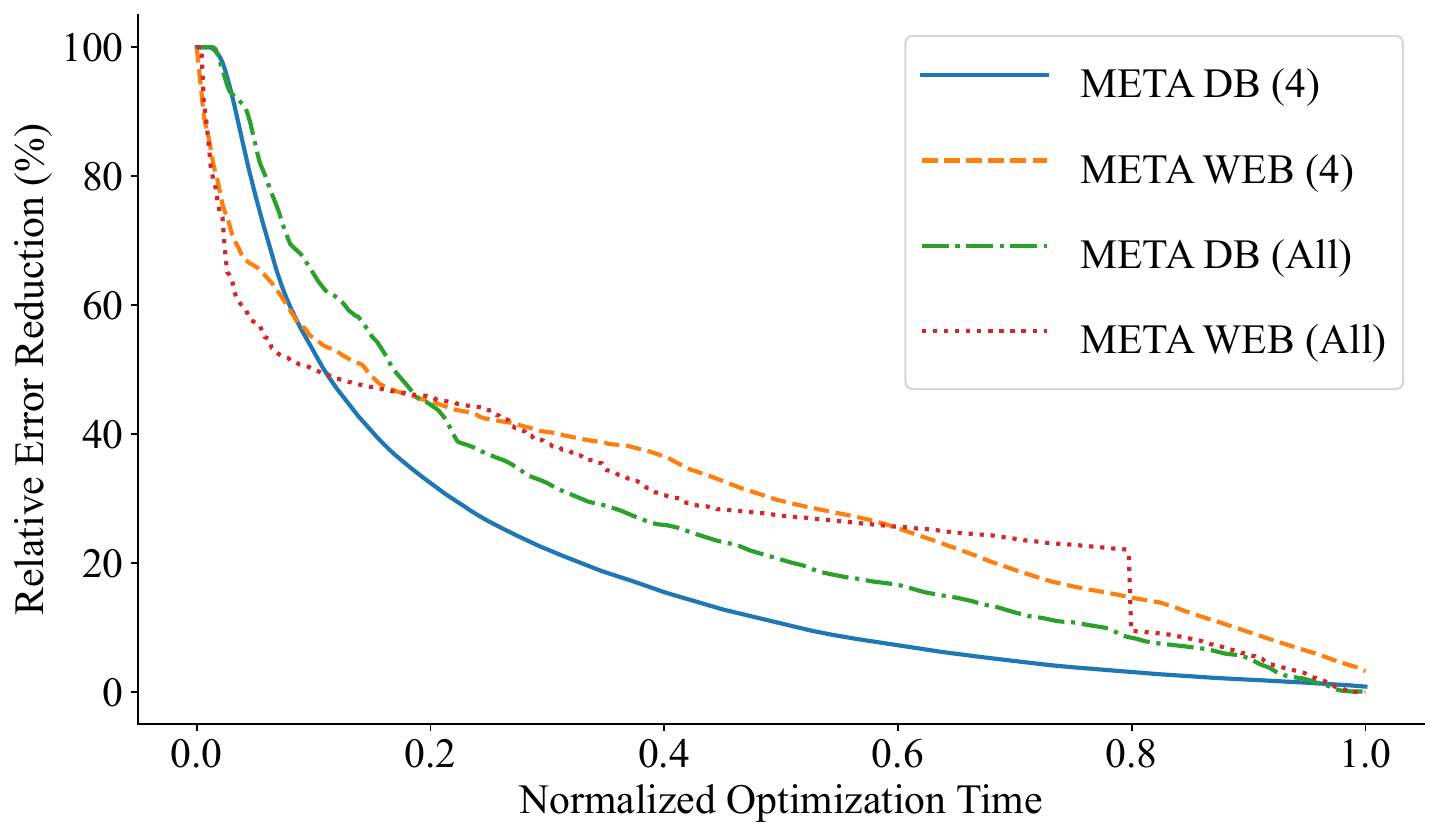}
    \caption{Relative error reduction of MLU in topologies. }
    \label{fig:different time}
\end{figure}
The results demonstrate that SSDO achieves rapid error reductions during the initial stages of optimization across all topologies. This characteristic provides strong support for the practicality of hot-start mode and early termination strategies, enabling high-quality solutions to be obtained with constrained computation time.

The effectiveness of hot-start SSDO is further validated in Appendix~\ref{appendix:solving_time_analysis}, which compares hot-start and cold-start modes. As shown in Figures~\ref{fig:MLU hot}-\ref{fig:time hot}, hot-start SSDO—initialized with DOTE-m solutions—outperforms DOTE-m and approaches the performance of cold-start SSDO, while requiring relatively less computation time.  	However, in some cases, cold-start SSDO completes optimization faster than hot-start mode due to the overhead of generating the initial solution by DOTE-m.  This suggests that in practical deployment, running both hot-start and cold-start SSDO in parallel and selecting the better-performing solution can further enhance efficiency. Additionally, Table~\ref{tab:SSDO_hot_tor_web} demonstrates that even with early termination, hot-start SSDO—leveraging DOTE-m’s solutions—reduces MLU by up to 35.9\% within just 3 seconds, confirming SSDO’s adaptability to different time constraints.

\subsection{Ablation Study of SSDO}\label{Ablation}
We perform an ablation study to assess the impact of SSDO’s key features on its overall performance.

\noindent\textbf{Design of BBSM.}
The BBSM accelerates the SO process and identifies globally beneficial solutions. In the SSDO/LP variant, subproblems are solved using LP solver (Gurobi), but split ratios are refined by BBSM to maintain consistency. Table~\ref{tab:time} shows that SSDO/LP is significantly slower than SSDO, demonstrating the efficiency of BBSM. Meanwhile, SSDO/LP-m employs split ratios calculated by Gurobi directly. As shown in Table~\ref{tab:mlu}, these ratios lead to a higher MLU, emphasizing the necessity of using balanced solutions.

\noindent\textbf{Design of SD Selection.}
SSDO optimizes SDs associated with edges of the highest real-time utilization, focusing on bottlenecks in each iteration. By contrast, SSDO/Static traverses all SDs per iteration. Table~\ref{tab:time} shows that SSDO/Static variant incurs substantially longer computation times, proving the efficiency of our prioritization strategy.

\begin{table}[h!]
\centering

\begin{tabular}{@{}lccc@{}}
\toprule
\textbf{Topology}     & \textbf{SSDO} & \textbf{SSDO/LP} & \textbf{SSDO/Static} \\ \midrule
PoD-level DB          & 0.03          & 0.15               & 1.27                 \\
PoD-level WEB         & 0.14          & 1.57               & 3.81                 \\
ToR-level DB (4)      & 2.16          & 202.28             & 184.37               \\
ToR-level WEB (4)     & 17.95         & 2796.84            & 3374.04              \\ \bottomrule
\end{tabular}
\caption{Comparison of computation Time (seconds) Across Variants}
\label{tab:time}
\end{table}

\begin{table}[h!]
\centering

\begin{tabular}{@{}lccc@{}}
\toprule
\textbf{Topology}     & \textbf{SSDO} & \textbf{SSDO/LP-m} \\ \midrule
PoD-level DB          & 1.00          & 1.10               \\
PoD-level WEB         & 1.00          & 1.44               \\
ToR-level DB (4)      & 1.01          & 3.41               \\
ToR-level WEB (4)     & 1.00          & 5.06               \\ \bottomrule
\end{tabular}
\caption{Comparison of MLU Across Variants}
\label{tab:mlu}
\end{table}

% \begin{figure}[tp]
% \centering
% \includegraphics{figures/mouse}
% \caption{\blindtext}
% \end{figure}

\section{Related Work}

\noindent\textbf{TE in DCNs and WANs.}
TE is critical for optimizing network performance, ensuring fairness, and preventing link overutilization in both DCNs and WANs. Hardware-based TE methods such as ECMP \cite{hoppsAnalysisEqualCostMultiPath2000,7017162} and WCMP \cite{zhouWCMPWeightedCost2014,cheng2020namp} are commonly employed to efficiently utilize bandwidth. However, these methods struggle with asymmetry and heterogeneity in traffic patterns. To overcome these challenges, SDN-based centralized TE systems \cite{wang_cope_2006,applegate_making_nodate} have gained popularity by addressing global optimization objectives such as MLU. While effective, scaling these systems to large, dynamic networks remains a significant challenge.

\noindent \textbf{Machine Learning in TE.}
Machine learning (ML)  has been applied in TE primarily for two purposes: prediction of traffic demand and direct configuration of TE. The first category uses predictive models to estimate future traffic based on historical data \cite{9213008,zhang2005optimal,kumar_semi-oblivious_nodate,Prophet}, which are then input into optimization algorithms to compute TE configurations. The second category learns a mapping from traffic to TE configurations, as demonstrated by methods like DOTE \cite{perry_dote_nodate} and others \cite{xu_teal_2023,valadarsky_learning_2017,liu_figret_2023}. Although these approaches leverage the ability of ML to model complex relationships, they face scalability challenges in large networks and struggle to handle unexpected traffic bursts, limiting their applicability in dynamic and large-scale networks.

\noindent \textbf{TE Acceleration.}
TE acceleration have been extensively studied to address the computational challenges of large-scale TE. For SDN environments, methods such as Teal \cite{xu_teal_2023} and POP \cite{narayanan_solving_2021} support both maximum flow and MLU minimization objectives, while NCFlow \cite{abuzaid_contracting_2021} is specifically tailored for maximum flow optimization. In hybrid SDN scenarios \cite{9213008,khorsandroo2021hybrid,silva2021hybrid}, Agarwal et al. \cite{6567024} proposed a greedy SDN switch placement approach combined with a fully polynomial-time approximation scheme to optimize traffic split ratios. Building on this, Guo et al. \cite{guo2021traffic,guo2014traffic,guo2021routing} introduced heuristic algorithms that jointly optimize OSPF link weights and SDN traffic splits, effectively reducing MLU in hybrid networks. Despite these advancements, achieving both efficiency and high TE quality remains a significant challenge in large-scale and dynamic network environments.

\section{Discussion}

\textbf{Analysis of optimality.} As noted in \cref{numerical-testing}, there remains a small but notable gap between the MLU achieved by SSDO and the theoretical optimum. This gap arises because SSDO may terminate when it encounters a particular situation, referred to as deadlock. In Appendix~\ref{sec:deadlock}, we provide an illustrative example of deadlock, define the phenomenon in detail, and demonstrate how it affects SSDO performance. We also discuss why deadlock rarely affects the reliability of SSDO.

\noindent\textbf{Analysis of objective.} SSDO is effective for MLU, since its bottleneck-driven monotonicity enables clean decomposition and efficient solution. Other objectives, such as throughput, lack these properties, though prior work like PCF~\cite{jiangPCFProvablyResilient2020} shows they can be related to MLU within a unified framework. Thus, SSDO’s strong guarantees hold for MLU, while extensions to other metrics remain possible only with approximation.  

\noindent \textbf{Analysis of Implementation:} In software-defined TE systems \cite{xu_teal_2023}, as described in \cref{sec:system}, SSDO is used within the TE controller to solve optimization problems. The TE controller takes current traffic demands and network topology as inputs, and produces traffic allocations as output. Recently, some DL-based systems have begun using historical traffic data as input. We believe SSDO could potentially be applied to these systems.

\section{Conclusion}
In this work, we introduce SSDO, a novel TE acceleration algorithm designed for large-scale DCNs. SSDO employs a sequential subproblem-solving strategy, where each subproblem optimizes the split ratios for a specific source-destination (SD). The subproblem order is dynamically adjusted based on real-time utilization to accelerate convergence. Each subproblem is solved using the Balanced Binary Search Method (BBSM), which efficiently identifies the most balanced and MLU-minimizing solution. To further improve efficiency, SSDO supports hot-start initialization, leveraging existing TE solutions as starting points, and early termination, ensuring high-quality solutions within limited computation time. Experimental results demonstrate that SSDO significantly outperforms existing methods, achieving superior TE quality while maintaining competitive computation efficiency. These features make SSDO a scalable and robust solution for large-scale TE in real-world networks.

%-------------------------------------------------------------------------------

\newpage
\section{ACKNOWLEDGMENTS}
We thank our anonymous reviewers for their insightful comments and Ryan Beckett for shepherding this work. We also thank Zhangheng Liu, Haozhe Li, Fanfan Li, Yuqian Ying, for their helpful feedback. This work was supported by “the Fundamental Research Funds for the Central Universities”.
\bibliographystyle{plain}
%\bibliography{\jobname}
% \input{usenix2019_v3.1.bbl}
\bibliography{reference}

\newpage

\section*{Appendix}
\appendix

\section{Traffic Engineering in Path Form}\label{Notation for multi}

The traffic engineering (TE) problem in path form represents the flow distribution across candidate paths for each source-destination (SD). This formulation reduces the number of decision variables in constrained scenarios but requires additional structures to map paths, edges, and nodes.

\subsection{Notations}
\begin{itemize}
    \item \( G = (V, E, c) \): The network topology, where:
        \begin{itemize}
            \item \( V \): The set of nodes.
            \item \( E \): The set of edges.
            \item \( c_e \): The capacity of link \( e \in E \).
        \end{itemize}
    \item \( D_{sd} \): The traffic demand from source \( s \) to destination \( d \), expressed as a scalar value.
    \item \( P_{sd} \): The set of candidate paths between source \( s \) and destination \( d \). Each path \( p \in P_{sd} \) consists of a sequence of links.
    \item \( f_p \): The split ratio for path \( p \in P_{sd} \), representing the fraction of \( D_{sd} \) allocated to path \( p \). It satisfies  $\sum_{p \in P_{sd}} f_p = 1$.

\end{itemize}

\subsection{Optimization  Model}\label{optimization_problem}

The goal of TE is to minimize the maximum link utilization (MLU), ensuring balanced traffic distribution and avoiding congestion. The problem is formulated as follows:
\begin{align}
    \min_{f_p} \quad & \max_{e \in E} \frac{\sum_{s,d \in V} \sum_{p \in P_{sd}, e \in p} D_{sd} \cdot f_p}{c_e}, \label{eq:mlu_objective} \\
    \text{s.t.} \quad & \sum_{p \in P_{sd}} f_p = 1, \quad \forall s, d \in V, \label{eq:flow_conservation} \\
    & 0 \leq f_p \leq 1, \quad \forall s, d \in V, \forall p \in P_{sd}. \label{eq:split_ratio_bounds}
\end{align}

Equation~\eqref{eq:mlu_objective} minimizes the MLU across all network links. Equation~\eqref{eq:flow_conservation} ensures that the total split ratios sum to one for each SD, while Equation~\eqref{eq:split_ratio_bounds} enforces non-negativity and normalization constraints on the split ratios.

\section{SSDO in Path Form}\label{SSDO_path_form}

The Sequential Source-Destination Optimization (SSDO) minimizes the MLU \( u \) by iteratively adjusting path split ratios \( f_p \). The process consists of the following steps:

\begin{enumerate}
    \item \textbf{Initialization}:
    \begin{itemize}
        \item Set initial split ratios \( f_p \) for all paths, ensuring:
        \[
        \sum_{p \in P_{sd}} f_p = 1, \quad \forall s, d \in V.
        \]
        \item Compute the initial link utilization by
     $$
           U[e] = \sum_{s, d \in V} \sum_{p \in P_{sd}, e \in p} \frac{D_{sd} f_p}{c_e}.
           \label{Ue}
    $$
    \item Set initial $u_{prev}$:  $$u_{\text{prev}}=\max_{e \in E} U[e]. $$
    \end{itemize}

    \item \textbf{Identify Congested Edges}:
    \begin{itemize}
        \item Identify the set of edges \( E_{\text{max}} \subseteq E \) with utilization equal to the maximum \( u \):
        \[
        E_{\text{max}} = \{e \in E \mid U[e] = u_{prev}\}.
        \]
    \end{itemize}

    \item \textbf{Map to SD}:
    \begin{itemize}
        \item For each edge \( e \in E_{\text{max}} \), identify the set of SD \( (s, d) \) whose paths \( P_{sd} \) traverse \( e \).
    \end{itemize}

    \item \textbf{Update Split Ratios Using PB-BBSM}:
    \begin{itemize}
        \item For each identified SD \( (s, d) \), apply the Path-Based Balanced Binary Search Method (PB-BBSM) to update the split ratios \( f_p \) for paths \( p \in P_{sd} \).
        \item The detailed steps are provided in Section~\ref{sec:PB-BBSM}.
    \end{itemize}

    \item \textbf{Recompute Link Utilization}:
    \begin{itemize}
        \item Recalculate \( U[e] \) for all \( e \in E \) using the updated \( f_p \).
        \item Update the maximum link utilization \( u \):
        \[
        u = \max_{e \in E} U[e].
        \]
    \end{itemize}

    \item \textbf{Convergence Check}:
    \begin{itemize}
        \item If the reduction in \( u \) satisfies:
        \[
        |u_{\text{prev}} - u| \leq \epsilon_0,
        \]
        terminate the algorithm and return the optimized split ratios \( f_p \) and the minimized \( u \).
        \item Otherwise, set \( u_{\text{prev}} = u \), return to Step 2, and continue the iterations.
    \end{itemize}
\end{enumerate}

\section{Path-Based Balanced Binary Search Method}
\label{sec:PB-BBSM}

PB-BBSM adjusts the split ratios \( f_p \) for a given SD \( (s, d) \) to minimize \( u \), while ensuring traffic conservation. The algorithm is shown in Algorithm \ref{ag:PBBBSM}.

\begin{algorithm}[h]
    \SetAlgoLined
    \KwIn{Utilization matrix $U$, source $s$, destination $d$, demand matrix $D$, candidate paths $P_{sd}$, tolerance $\epsilon$.}
    \KwOut{Optimal split ratios $f_p$ for paths $p \in P_{sd}$.}
    \BlankLine
    Initialize $\underline{u} \gets 0$, $\bar{u} \gets \max(U)$; \\
    Initialize split ratios for all paths in $P_{sd}$; \\
            \For{each path $p \in P_{sd}$}{
            \(
            R[e] = U[e] - \frac{D_{sd} f_p}{c_e}, \forall e \in p;
            \)\\}
    \While{$\bar{u} - \underline{u} > \epsilon$}{
        $u_{\text{mid}} \gets \frac{\underline{u} + \bar{u}}{2}$; \\
        \For{each path $p \in P_{sd}$}{
            \(
            \bar{f}_p = \min_{e \in p} \frac{(u_{\text{mid}} - R[e]) \cdot c_e}{D_{sd}};
            \)  
             $\bar{f}^b_p \gets \max(\bar{f}_p, 0)$; \\
        }
        \If{$\sum_{p \in P_{sd}} \bar{f}^b_p > 1$}{
            $\bar{u} \gets u_{\text{mid}}$; \\
        }
        \Else{
            $\underline{u} \gets u_{\text{mid}}$; \\
        }
    }
 \For{each path $p \in P_{ij}$}{
                \(\bar{f}_p = \min_{e \in p} \frac{(\bar{u} - R[e]) \cdot c_e}{D_{sd}};
            \)  
             $\bar{f}^b_p \gets \max(\bar{f}_p, 0)$ ;
 
 }    
    \(
    f_p \gets \frac{\bar{f}_p^b}{\sum_{p \in P_{sd}} \bar{f}_p^b}, \quad \forall p \in P_{sd};
    \)\\
    \Return{$f_p$};

    \caption{Path-Based Balanced Binary Search Method (PB-BBSM)}
    \label{ag:PBBBSM}
\end{algorithm}

\section{Monotonicity of Upper Bound of the Split Ratio}
\label{appendix:monotonicity}

This appendix establishes the nondecreasing property of \(\bar{f}_{skd}(u)\) with respect to the MLU parameter \(u\).

\noindent\textbf{Notation recap.}  
From Equations~\eqref{eq:Tikj} and~\eqref{eq:fikj} in the main text:
\begin{equation*}
    T_{skd}(u) = 
    \begin{cases} 
        \min\bigl\{u c_{sk} - Q_{sk},\ u c_{kd} - Q_{kd}\bigr\}, & k \in P_{sd},\ k \neq d, \\[6pt]
        u c_{sd} - Q_{sd}, & k = d,
    \end{cases}
\end{equation*}
and
\begin{equation*}
    \bar{f}_{skd}(u) = \frac{T_{skd}(u)}{D_{sd}},
\end{equation*}
where
\begin{itemize}
    \item \(u \in \mathbb{R}_{\geq 0}\) is the candidate MLU value,
    \item \(c_e \geq 0\) denotes the link capacity of edge $e$,
    \item \(Q_{ij} \geq 0\) represents the background traffic on link \((i,j)\),
    \item \(D_{sd} > 0\) is the demand from source \(s\) to destination \(d\).
\end{itemize}

\begin{theorem}
For all \(s,d,k\in V\), the function \(\bar f_{skd}(u)\) is nondecreasing over \(u \in [0,+\infty)\).
\end{theorem}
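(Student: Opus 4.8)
The plan is to reduce the entire claim to the elementary fact that a pointwise minimum of nondecreasing functions is nondecreasing, handling the two branches in the definition of $T_{skd}$ separately. Since $\bar f_{skd}(u) = T_{skd}(u)/D_{sd}$ with $D_{sd} > 0$ a fixed positive constant, division by $D_{sd}$ is order-preserving; hence it suffices to prove that $u \mapsto T_{skd}(u)$ is nondecreasing, after which the final scaling step is immediate.

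First I would dispose of the direct-path case $k = d$. Here $T_{skd}(u) = u\,c_{sd} - Q_{sd}$ is affine in $u$ with slope $c_{sd} \geq 0$, and an affine function with nonnegative slope is nondecreasing, so this branch closes in one line.

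Next I would treat the transit case $k \neq d$, where $T_{skd}(u) = \min\{u\,c_{sk} - Q_{sk},\, u\,c_{kd} - Q_{kd}\}$. Writing $g_1(u) = u\,c_{sk} - Q_{sk}$ and $g_2(u) = u\,c_{kd} - Q_{kd}$, each $g_i$ is affine with slope $c_{sk} \geq 0$ or $c_{kd} \geq 0$, hence individually nondecreasing. The key step is the lemma that $\min\{g_1,g_2\}$ inherits monotonicity: for any $u_1 \leq u_2$, let $j$ index the branch attaining the minimum at the larger argument $u_2$; then $T_{skd}(u_1) \leq g_j(u_1) \leq g_j(u_2) = T_{skd}(u_2)$, where the first inequality is the definition of the minimum and the second is monotonicity of $g_j$. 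This chain yields $T_{skd}(u_1) \leq T_{skd}(u_2)$, completing the transit case, and dividing through by $D_{sd} > 0$ finishes the proof.

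There is no serious obstacle here: the result follows from linearity of each capacity-minus-background-traffic expression together with closure of nondecreasing functions under pointwise minimum and positive scaling. The only point that deserves explicit care is the minimum-of-affine-functions step, since it is tempting to argue without fixing the minimizing index; selecting $j$ at the larger argument $u_2$ and writing out the chain $T_{skd}(u_1) \leq g_j(u_1) \leq g_j(u_2) = T_{skd}(u_2)$ makes the inequality rigorous and avoids any case analysis on which branch is active.
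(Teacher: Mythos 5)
Your proof is correct and follows essentially the same route as the paper's: both arguments reduce the claim to the nonnegative-slope affinity of $u c_{ij} - Q_{ij}$, the closure of nondecreasing functions under pointwise minimum for the $k \neq d$ branch, and order preservation under division by $D_{sd} > 0$. Your explicit chain $T_{skd}(u_1) \leq g_j(u_1) \leq g_j(u_2) = T_{skd}(u_2)$ merely spells out the min-lemma that the paper invokes without proof.
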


\begin{proof}
Consider a link \((i,j)\). The term \(g_{ij}(u) = u c_{ij} - Q_{ij}\) is affine in \(u\) with slope \(c_{ij}\ge 0\). Hence \(g_{ij}(u)\) is nondecreasing.  

If \(k=d\), then
$$
T_{skd}(u) = g_{sd}(u),
$$
which is nondecreasing.  
If \(k\neq d\), then
$$
T_{skd}(u) = \min\{g_{sk}(u), g_{kd}(u)\},
$$
the pointwise minimum of two nondecreasing functions, which is also nondecreasing.  

Since \(D_{sd}>0\), dividing by \(D_{sd}\) preserves monotonicity. Therefore,
\(\bar f_{skd}(u) = T_{skd}(u)/D_{sd}\) is nondecreasing in \(u\).
\end{proof}

\paragraph{Remark.}
A finite sum of nondecreasing functions remains nondecreasing, so 
\(\sum_{k\in V}\bar f_{skd}(u)\) is also nondecreasing. This property justifies the feasibility check and the binary search procedure used in the main text.

\section{Hot-start and Early Termination Analysis}
\label{appendix:solving_time_analysis}

This section evaluates the performance of hot-start SSDO and the effectiveness of early termination strategies. Experiments were conducted on the ToR-level WEB topology (4 paths) topology, comparing hot-start SSDO (SSDO-hot) with cold-start SSDO (SSDO-cold) and DOTE-m. Additionally, we analyze the effect of early termination in hot-start to highlight its practicality for time-sensitive network.

\begin{figure}[hbt]
    \centering
    \includegraphics[width=1\linewidth]{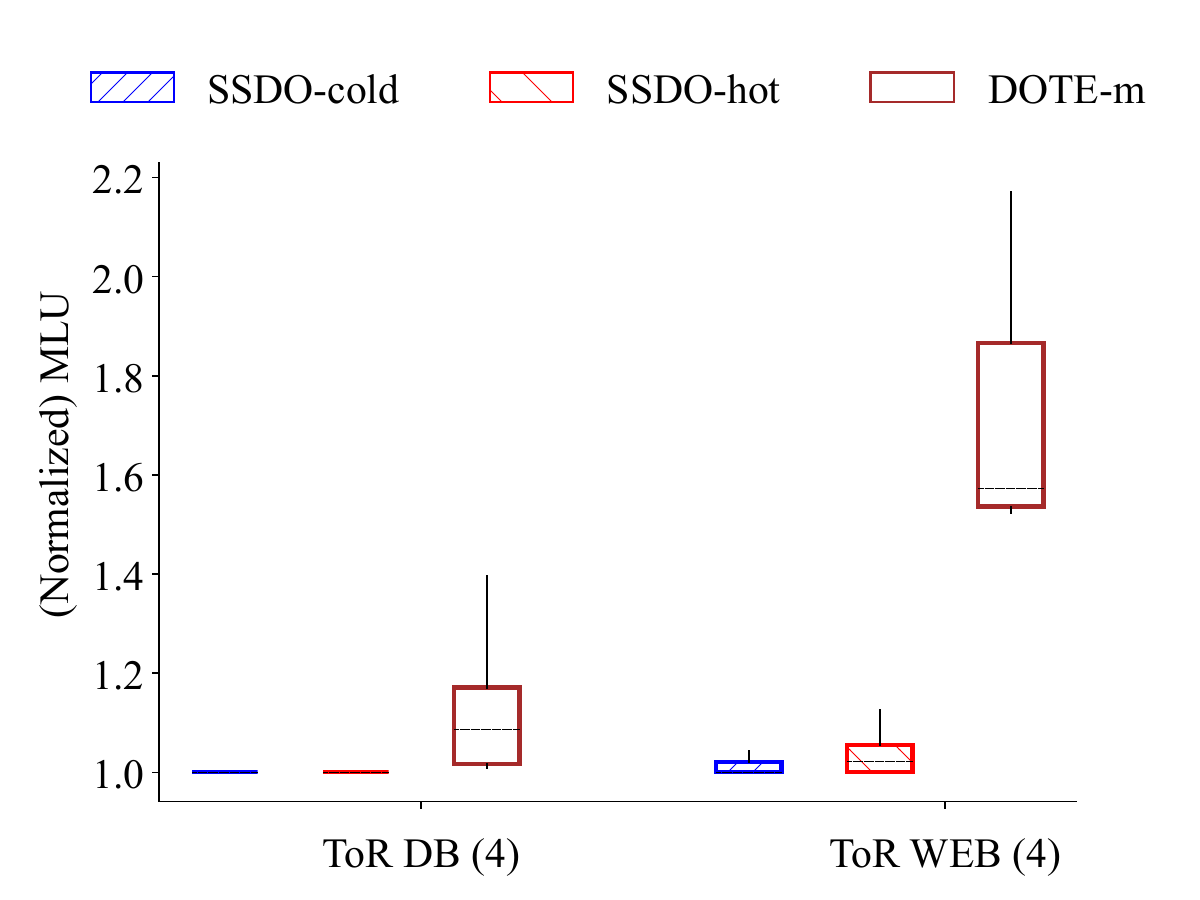}
    \caption{Comparison of SSDO-hot, SSDO-cold, and DOTE-m in MLU for ToR-level (4 paths) topologies.}
    \label{fig:MLU hot}
\end{figure}

\begin{figure}[hbt]
    \centering
    \includegraphics[width=1\linewidth]{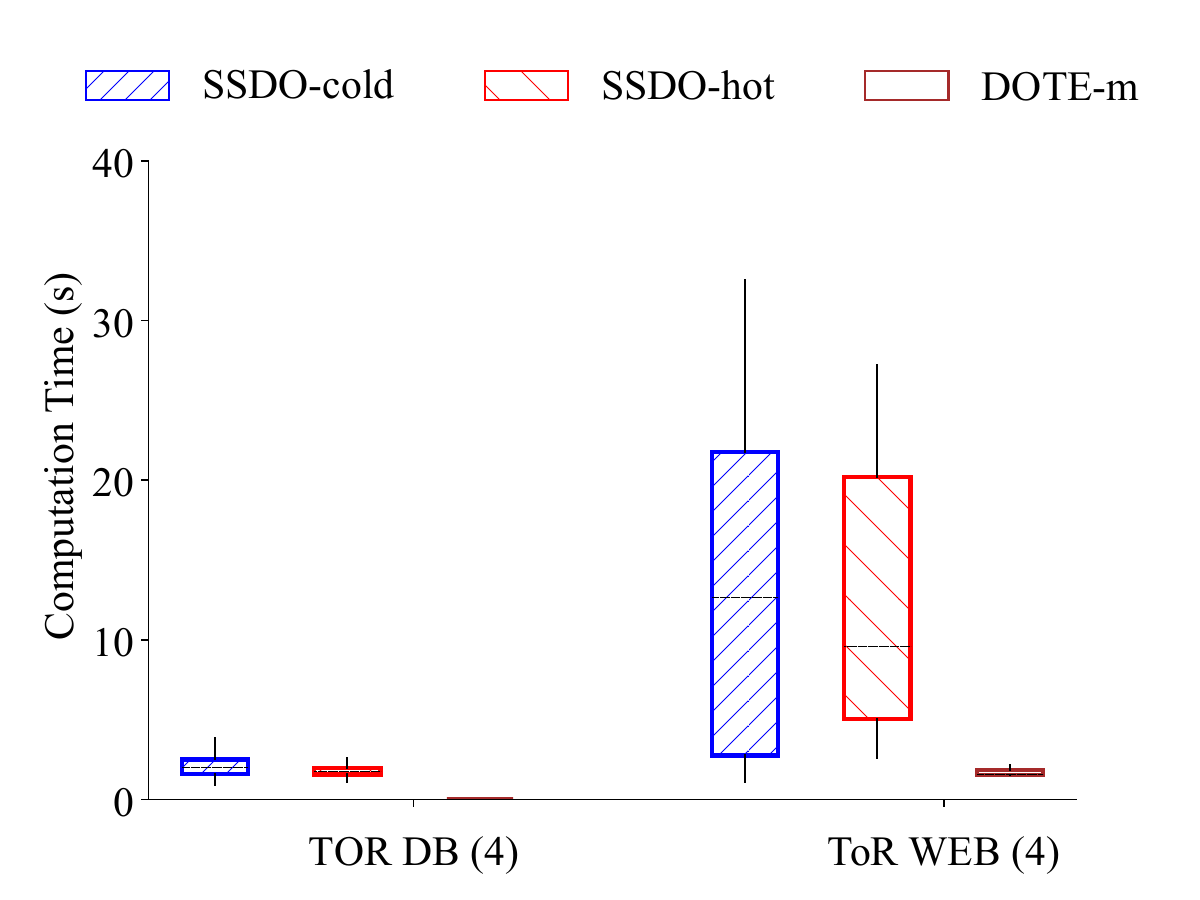}
    \caption{Comparison of SSDO-hot, SSDO-cold, and DOTE-m in computation time for ToR-level (4 paths) topologies.}
    \label{fig:time hot}
\end{figure}

\subsection{Effectiveness of Hot-start Mode}

In hot-start mode, SSDO initializes with solutions generated by DOTE-m, while in cold-start mode, the initial split ratios are determined based on the shortest-path strategy, as described in \S~\ref{initializaiton}. Figure~\ref{fig:MLU hot} compares the MLU achieved by SSDO-hot, SSDO-cold, and DOTE-m. The results show that SSDO-hot consistently outperforms DOTE-m and achieves performance close to SSDO-cold. Figure~\ref{fig:time hot} presents the computation time comparison. Although SSDO-hot includes the time required for DOTE-m to generate the initial solution, it runs faster than SSDO-cold in most cases. This highlights the advantage of hot-start mode in efficiently refining existing solutions while reducing computational cost.

\subsection{Effectiveness of Early Termination in Hot-start Mode}
To evaluate early termination, we track MLU reduction in SSDO-hot over time. Table~\ref{tab:SSDO_hot_tor_web} shows that SSDO-hot achieves substantial improvements within a few seconds. For instance, case 8 reduces MLU by 24.2\% in 5 seconds, while cases 1 and 2 reach optimal solutions even faster. These results demonstrate that early termination in hot-start scenarios effectively balances solution quality and runtime.
\begin{table}
    \centering

    \begin{tabular}{c|cccc}
        \hline
        Case & 0s & 3s & 5s & 10s \\
        \hline
        1 & 1.5637 & 1.0000 & 1.0000 & 1.0000 \\
        2 & 1.5225 & 1.0000 & 1.0000 & 1.0000 \\
        3 & 1.5384 & 1.1842 & 1.1412 & 1.0545 \\
        4 & 1.9564 & 1.4177 & 1.3047 & 1.1329 \\
        5 & 1.8368 & 1.6098 & 1.5286 & 1.4208 \\
        6 & 1.5824 & 1.2440 & 1.2035 & 1.0564 \\
        7 & 1.5291 & 1.2353 & 1.1643 & 1.0000 \\
        8 & 2.1710 & 1.7314 & 1.6415 & 1.4610 \\
        \hline
    \end{tabular}
        \caption{Normalized MLU reduction over time in SSDO-hot for ToR-level WEB (4 paths) topology.}
    \label{tab:SSDO_hot_tor_web}
\end{table} 

\subsection{Summary of Hot-Start and Early Termination Advantages}
	The results demonstrate SSDO’s robustness in handling strict computational constraints. Hot start accelerates optimization by leveraging existing solutions, while early termination ensures high-quality results within limited time. These strategies enable SSDO to efficiently adapt to real-time performance demands, making it a practical solution for dynamic and time-sensitive environments.

\section{Deadlock of SSDO}\label{sec:deadlock}

\subsection{Definition of Deadlock in SSDO}
The definition of Deadlock is as follows.

\begin{definition}[Deadlock in SSDO]
We call the current split‐ratio configuration of SSDO to be in a \emph{deadlock} if the following two conditions hold:
\begin{itemize} 
    \item For every Source-destination (SD) pair, any adjustment to its split ratios, while keeping the split ratios of all other SD demands fixed, fails to reduce the current MLU.
    \item There nevertheless exists some feasible traffic engineering configuration (e.g., by jointly adjusting split ratios for multiple SD demands) that achieves a strictly lower MLU than the current configuration.
\end{itemize}
\end{definition}

\subsection{Illustrative Example}
An anonymous reviewer provided a compelling example that highlights this phenomenon. 
Consider a directed ring with $n=8$ nodes labeled $A$ through $H$, with clockwise edges 
$\{AB, BC, CD, DE, EF, FG, GH, HA\}$, each of capacity $1$. 
In addition, we add ``skip'' edges that connect every second node 
(e.g., $AC, BD, CE, DF, EG, FH, GA, HB$), each of infinite capacity. 
Each demand is placed between adjacent nodes in the clockwise direction, and each demand has two possible paths: 1) the direct one-hop edge (e.g., $AB$), 2) a long detour around the rest of the ring (e.g., $ACDEFGHB$).
The demand for each clockwise pair of nodes is $1/5$, e.g., $A \to B, B \to C, \ldots, H \to A$. 
The topology, including paths and demand information, is illustrated in Figure~\ref{fig:ring-example}.

\begin{figure}[tbh]
\centering
\begin{tikzpicture}[scale=1.0, every node/.style={circle, draw, minimum size=0.7cm}]
  % nodes
  \node (A) at (0,2) {A};
  \node (B) at (2,2) {B};
  \node (C) at (4,2) {C};
  \node (D) at (4,0) {D};
  \node (E) at (4,-2) {E};
  \node (F) at (2,-2) {F};
  \node (G) at (0,-2) {G};
  \node (H) at (0,0) {H};

  % ring edges (capacity 1)
  \draw[->, thick] (A) -- (B);
  \draw[->, thick] (B) -- (C);
  \draw[->, thick] (C) -- (D);
  \draw[->, thick] (D) -- (E);
  \draw[->, thick] (E) -- (F);
  \draw[->, thick] (F) -- (G);
  \draw[->, thick] (G) -- (H);
  \draw[->, thick] (H) -- (A);

  % skip edges (infinite capacity)
  \draw[->, dashed, thick] (A) to[bend left=30] (C);
  \draw[->, dashed, thick] (B) to[bend left=30] (D);
  \draw[->, dashed, thick] (C) to[bend left=30] (E);
  \draw[->, dashed, thick] (D) to[bend left=30] (F);
  \draw[->, dashed, thick] (E) to[bend left=30] (G);
  \draw[->, dashed, thick] (F) to[bend left=30] (H);
  \draw[->, dashed, thick] (G) to[bend left=30] (A);
  \draw[->, dashed, thick] (H) to[bend left=30] (B);

\end{tikzpicture}
\caption{Directed ring topology with $n=8$ nodes and skip edges. Each clockwise pair of nodes has a demand of $1/5$, with two paths available for each demand: the direct one-hop edge (e.g., $AB$) and a long detour around the ring (e.g., $ACDEFGHB$). The topology includes both solid edges (capacity $1$) and dashed skip edges (infinite capacity).}
\label{fig:ring-example}
\end{figure}

Under the deadlock configuration, each SD routes its entire demand along the long detour (bypassing its direct one-hop edge). 
Let $D=1/(n-3)$. 
Each detour traverses exactly $n\!-\!3$ unit-capacity ring edges, and by symmetry each ring edge lies on $n\!-\!3$ distinct detours; hence the load on every unit-capacity ring edge equals $(n-3) D=1$, so $\mathrm{MLU}=1$. 
Changing the split ratios of a \emph{single} SD only removes traffic from its $n\!-\!3$ detour edges and shifts it onto its direct edge, which then exceeds capacity. As a result, the maximum utilization remains at least~1. 
Only a coordinated adjustment of many SDs can simultaneously reduce the load on all ring edges. 
By contrast, the global optimum sends each SD on its direct one-hop edge, yielding a per-edge load of $D=1/(n-3)$ and thus $\mathrm{MLU}=1/(n-3)$.

This example demonstrates that SSDO may, in principle, terminate at a deadlock and miss the optimum. However, it’s important to emphasize two points:
\begin{itemize}
    \item \textbf{Pathological Initialization:} The scenario described relies on a pathological initialization where all traffic is routed along detour paths. This initialization is highly unlikely in practice. In \S\ref{initializaiton}, we introduce a cold-start initialization strategy that systematically avoids these pathological cases. 
\item \textbf{Proximity to Optimality:} In our numerical tests, SSDO almost always encounters deadlocks. However, even when deadlock occurs, the resulting solution is typically very close to the optimum. According to our cold-start initialization method, the initial solution will not be a deadlock (if all demands are routed through their one-hop path, but the MLU cannot be reduced by adjusting a single SD demand, the configuration is already optimal). Therefore, when deadlock occurs, it is typically far from the initial solution.
\end{itemize}

\section{TE Systems}
\label{sec:system}

\begin{figure}[bth]
\centering
\includegraphics[width=0.46\textwidth]{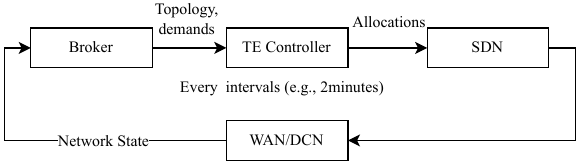}
\caption{Control loop of  traffic engineering. The TE controller periodically receives demand and topology inputs, solves the optimization problem, and updates router configurations through SDN.}
\label{fig:control_loop}
\end{figure}

Traffic engineering (TE) systems are integral to managing the flow of traffic. These systems allocate network demands to achieve high link utilization, and improve resilience to link failures.  In TE systems, traffic allocation is formulated as an optimization problem, where the goal is to achieve optimal network properties such as minimum MLU, maximum throughput. The core of these systems is typically a software-defined TE controller, as illustrated in Figure~\ref{fig:control_loop}.

The TE controller operates in a periodic control loop, as depicted in Figure~\ref{fig:control_loop}. It receives real-time traffic demands and network topology information from the bandwidth broker at regular intervals (e.g., every two minutes). The TE controller then solves the optimization problem based on these inputs and calculates the optimal traffic allocations. Finally, the traffic allocations are translated into router configurations that are deployed via Software-Defined Networking, updating the routing policies.

This periodic feedback loop ensures that the network continuously adapts to changing traffic patterns and evolving network conditions. The ability to dynamically adjust traffic distribution is essential for achieving efficient network utilization and maintaining quality of service.

%%%%%%%%%%%%%%%%%%%%%%%%%%%%%%%%%%%%%%%%%%%%%%%%%%%%%%%%%%%%%%%%%%%%%%%%%%%%%%%%
\end{document}